\newcommand{\Hc}{\mathcal{H}}
\newcommand{\Dc}{\mathcal{D}}
\newcommand{\mR}{\mathbb{R}}
\newcommand{\inc}{\mathrm{in}}
\newcommand{\out}{\mathrm{out}}
\newcommand{\ret}{\mathrm{ret}}
\newcommand{\adv}{\mathrm{adv}}
\newcommand{\wt}{\widetilde}
\newcommand{\al}{\alpha}
\newcommand{\ga}{\gamma}
\newcommand{\la}{\lambda}
\newcommand{\si}{\sigma}
\newcommand{\w}{\omega}
\newcommand{\W}{\Omega}
\newcommand{\vep}{\varepsilon}
\newcommand{\dsp}{\displaystyle}
\newcommand{\ov}{\overline}
\newcommand{\p}{\partial}
\newcommand{\pav}{\bm{\partial}}
\newcommand{\piv}{\bm{\pi}}
\newcommand{\con}{\mathrm{const}}
\newcommand{\tm}{\langle t\rangle}
\newcommand{\av}{\mathbf{a}}
\newcommand{\ev}{\mathbf{e}}
\newcommand{\bv}{\mathbf{b}}
\newcommand{\eti}{\tilde{\mathbf{e}}}
\newcommand{\xti}{\tilde{\mathbf{x}}}
\newcommand{\piti}{\tilde{\bm{\pi}}}
\newcommand{\hti}{\tilde{h}}
\newcommand{\Ev}{\mathbf{E}}
\newcommand{\Bv}{\mathbf{B}}
\newcommand{\Cv}{\mathbf{C}}
\newcommand{\lv}{\mathbf{l}}
\newcommand{\dav}{\dot{\mathbf{a}}}
\newcommand{\x}{\mathbf{x}}
\newcommand{\qv}{\mathbf{q}}
\newcommand{\xm}{\langle |\mathbf{x}|\rangle}
\newcommand{\y}{\mathbf{y}}
\newcommand{\pv}{\mathbf{p}}
\newcommand{\Av}{\mathbf{A}}
\newcommand{\Piv}{\mathbf{\Pi}}
\DeclareMathOperator{\Rp}{Re}
\DeclareMathOperator*{\slim}{s-lim}
\newtheorem{thm}{Theorem}
\newtheorem{pr}[thm]{Proposition}
\newtheorem{lem}[thm]{Lemma}
\setlist[itemize]{leftmargin=2.5em}
\begin{document}

\title[Electromagnetic gauge choice for scattering of~Schr\"{o}dinger particle]%
{Electromagnetic gauge choice\\ for scattering of~Schr\"{o}dinger particle}

\author{Andrzej Herdegen}

\address{Institute of Physics\\
    Jagiellonian University\\
    ul.\,S.\,{\L}ojasiewicza 11\\
    30-348 Krak\'{o}w\\
    Poland}

\email{herdegen@th.if.uj.edu.pl}
{\it}\date{}

\subjclass{Primary 81U99; Secondary 81V10}

\keywords{scattering, electromagnetic field, Schr\"odinger particle}

\date{}

\begin{abstract}

We consider a Schr\"{o}dinger particle placed in an external
electromagnetic field of the form typical for scattering settings in the
field theory: $F=F^\ret+F^\inc=F^\adv+F^\out$, where the current producing
$F^{\ret/\adv}$ has the past and future asymptotes homogeneous of degree
$-3$, and the free fields $F^{\inc/\out}$ are radiation fields produced by
currents with similar asymp\-totic behavior. We show that with appropriate
choice of electromagnetic gauge the particle has `in' and `out' states
reached with no further modification of the asymptotic dynamics. We use
a~special quantum-mechanical evolution `picture' in which the free
evolution operator has well-defined limits for $t\to\pm\infty$, thus the
scattering wave operators do not need the free evolution counteraction. The
existence of wave operators in this setting is established, but the proof
of asymptotic completeness is not complete: more precise characterization
of the asymptotic behavior of the particle for $|\x|=|t|$ would be needed.

\end{abstract}

\maketitle

\section{Introduction}\label{int}

Infrared problems are typical for theories with long-range interactions, and
extend over wide range of physical settings. They are particularly persistent
in the relativistic quantum theory--quantum field theory--where their nature
is not only technical, but also conceptual. The standard procedure adopted in
mathematically oriented formulations of the quantum electrodynamics (and
other theories with long range interaction) is to use local potentials (of
Gupta-Bleuler type) with adiabatically truncated interaction. One argues that
this setting is sufficient to construct (perturbatively) the local algebra of
observables of the theory. However, the removal of the cut-off is a singular
operation, which has the consequence that the states of the desired theory
cannot be those in which the truncated theory is constructed. This is
well-known and generally accepted. But one can also ask, whether the
truncation of interaction does not remove some structure from the theory in
an irreversible way; the algebraic equivalence of local algebras, for all
cutoff functions equal to one in the considered region, is based on an
interpolating relation, which becomes singular in the limit of the function
tending to unity on the whole spacetime.\footnote{For recent results on
adiabatic limit in quantum field theory, as well as a review of literature,
see \cite{duch18}.}

Considerations similar to those described above has motivated the pre\-sent
author, many years ago, to attempts to include, from the beginning, the
long-range degrees of freedom of the theory in the description. These
attempts went in two main directions: (i) construction of a nonlocal
electromagnetic potential in which scattering of a Dirac particle is infrared
nonsingular (see \cite{her95} for the analysis at the level of classical
fields); and (ii) extension of the algebra of the free quantum
electromagnetic field which includes long-range degrees of freedom, and which
is thought of as an asymptotic algebra, potentially starting point for
perturbation calculus (see \cite{her98} and a recent synopsis \cite{her17}).

The present article is a further test of the idea mentioned in (i) above. In
article \cite{her95} I have considered the scattering of the classical Dirac
field in an external electromagnetic field of the type supposed to be present
in full interacting theory. It was shown that if an appropriate (nonlocal, in
general) electromagnetic gauge is chosen, then the Dirac field has a
well-defined asymptote in remote future (and past) inside the lightcone. This
asymptote is reached without further corrections of asymptotic dynamics, as
usually employed in long-range scattering. Moreover, the scattered outgoing
Dirac field is constructed with the use of this asymptote. The hope behind
this analysis is that a similar construction in quantum case could similarly
relieve some of the infrared problems.

The article mentioned above lacks the discussion of the asymptotic behavior
of the Dirac field on the whole hyperplanes of constant time, for this time
tending to infinity. In the present paper we analyze this question in the
case of nonrelativistic Schr\"odinger particle. We show the existence of
asymptotic velocity operators and of isometric wave operators. The null
asymp\-totic behavior of radiation fields (and their potentials), together
with all their derivatives, is only of (1/time) type, which breaks the usual
assumptions imposed on time-decaying potentials considered in Schr\"odinger
scattering (see \cite{der97}, \cite{yaf10}). In the present setting this
behavior has prevented the proof of asymptotic completeness. Whether this can
be overcome is an open question.\footnote{Long-range scattering in the
position representation has been discussed before, see e.g.\ \cite{der97a}
and \cite{yaf10}, but not for the context considered here, and with no
relation to the choice of electromagnetic gauge.}

The choice of gauge found appropriate for the problem of \cite{her95} was
such that $x\cdot A(x)$ (Minkowski product of the position vector with the
potential) vanishes sufficiently fast in remote future (and past) inside the
lightcone \mbox{($x^2>0$)}. Here we shall construct an appropriate gauge in
the whole spacetime, and will see that its behavior inside the lightcone is
of the same type as before.

The method used for our analysis is the transformation of the time evolution
from the Schr\"o\-din\-ger picture to a new `picture', which we describe in
Section \ref{harmosc}, and in which the state vector of a particle tends to
its spacetime asymptotic forms in asymptotic times. As it turns out, the
simplest choice of such transformation is the well-known Niederer
transformation to the harmonic oscillator system \cite{nie73}. In Section
\ref{oscelmg} this oscillator is placed in electromagnetic field, and in
Section \ref{osctosch} transformed back to the Schr\"odinger picture. In
order to construct the time-dependent hamiltonian, and the evolution it
generates, we follow an article by Yajima \cite{ya11} in its use of the Kato
theorem.\footnote{There is also some similarity in the use of two time
evolutions related by unitary transformations. However, the transformations
are quite different: it is a pure gauge transformation in \cite{ya11}, while
here it is the transformation to the new `picture' mentioned above. Also, the
aims of these operations are quite different: Yajima's goal is to include
electromagnetic fields as singular as possible (both locally, and in
infinity), while here we are interested in scattering theory.} In Section
\ref{fieldsgauges} we discuss in detail the relation between potentials and
fields in the two pictures, and define a gauge appropriate for the
description of scattering along the lines described above. Section
\ref{scatteringosc} contains theorems on scattering in oscillator picture.
Reformulation of these results in natural spacetime terms and some final
remarks are contained in Section~\ref{scatspti}. Appendix~\ref{inequalities}
discusses some relations between domains of operators needed in the main
text. In Appendix \ref{aselm} we discuss the scope of electromagnetic fields
admitted in the article, and analyse, for completeness, their spacetime
estimates. Appendix~\ref{difk} clarifies a particular differentiation
employed in Section~\ref{scatteringosc}.

\section{Harmonic oscillator picture}\label{harmosc}

Let $\Hc=L^2(\mR^3)$ and denote $H_0=-\tfrac{1}{2}\Delta$, the free particle
hamiltonian with the corresponding evolution operator $U_0(t,s)$.\footnote{We
set $\hbar=1$, $c=1$ and use dimensionless rescaled quantities; to recover
physical quantities one should substitute $(t,\x)\mapsto(mt,m\x)$, with $m$
the mass of the particle. Also, the electromagnetic potentials to appear
later should be multiplied by $q/m$, with $q$ the charge of the particle.}
The free particle wave packet $\psi(t)=U_0(t,0)\psi$ is given by the Fourier
representation
\begin{equation}\label{free}
  \psi(t,\x)=(2\pi)^{-3/2}\int \exp\big[-\tfrac{i}{2}t|\mathbf{p}|^2+i\mathbf{p}\cdot\x\big]\hat{\psi}(\mathbf{p})\,d^3p\,,
\end{equation}
and for $\hat{\psi}$ regular enough, by the use of stationary phase method,
has the following asymptotic forms for $|t|\to\infty$:
\begin{equation}\label{stas}
  \psi(t,|t|\y)\sim |t|^{-3/2}e^{\mp i3\pi/4}\exp\big[\tfrac{i}{2}t|\y|^2\big]\hat{\psi}(\pm \y)\quad\text{for}\quad t\to\pm\infty\,.
\end{equation}

We would like to find a new evolution `picture', in which the state vector of
the particle has well defined limits for $t\to\pm\infty$. Using the
asymptotic forms \eqref{stas} as a guideline, we define for $t\in\mR$ the
following transformation in~$\Hc$:
\begin{equation}\label{N}
  [N(t)\chi](\x)=\tm^{-3/2}\exp\Big[\frac{it}{2}\Big(\frac{|\x|}{\tm}\Big)^2\Big]\chi\Big(\frac{\x}{\tm}\Big)\,,
\end{equation}
where $\tm>0$ is a $C^1$-function of $t$, such that $\tm/|t|\to1$ for
$|t|\to\infty$, whose exact form is to be determined. This is a unitary
transformation and the conjugate transformation is
\begin{equation}\label{N*}
  [N(t)^*\psi](\y)=\tm^{3/2}\exp\big[-\tfrac{i}{2}t|\mathbf{y}|^2\big]\psi(\tm\y)\,.
\end{equation}
We use the family of operators $N(t)$ to transform the Schr\"{o}dinger state
vectors and observables respectively by
\begin{equation}\label{hop}
  \psi_S(t)\mapsto \psi_N(t)=N(t)^*\psi_S(t)\,,\quad A_S(t)\mapsto A_N(t)=N(t)^*A_S(t)N(t)\,.
\end{equation}
The evolution operator for $\psi_N(t)$:
\begin{equation}\label{U0N}
  U_{0N}(t,s)=N(t)^*U_0(t,s)N(s)
\end{equation}
is strongly continuous, and for $\chi\in C^\infty_0(\mR^3)$ the vector
$U_{0N}(t,s)\chi$ is strongly differentiable in $t$ and in $s$, and a
straightforward calculation shows that
\begin{equation}\label{dU0N}
  i \p_t U_{0N}(t,s)\chi=H_{0N}(t)U_{0N}(t,s)\chi\,,
\end{equation}
where  (with $\tm'=d\tm/dt$)
\begin{multline}\label{H0N}
  H_{0N}(t)=N(t)^*H_0N(t)-iN(t)^*\p_tN(t)\\
  =\frac{1}{2\tm^2}\Big([\pv^2+(\tm^2-t^2)\x^2]+(t-\tm\tm')[\x\cdot\pv+\pv\cdot\x+2t\x^2]\Big)\,;
\end{multline}
here and in what follows $\pv=-i\pav$. A large class of functions $\tm$ for
which $t-\tm\tm'$ tends to zero and $\tm^2-t^2$ tends to a constant
sufficiently fast would fulfil our demands formulated after \eqref{stas}.
However, the simplest choice is to demand $t-\tm\tm'=0$, in which case
$\tm^2=t^2+\kappa^2$, where $\kappa^2$ is a~constant. The choice of this
constant is physically irrelevant, and as our variables are dimensionless,
the simplest choice is $\kappa=1$, so that $N(0)=\mathbb{1}$. The new picture
hamiltonian is now
\begin{equation}
 H_{0N}(t)=\frac{1}{2\tm^2}[\pv^2+\x^2]\,,\quad \tm=\sqrt{t^2+1}\,,
\end{equation}
and the choice of $\tm$ holds in the rest of the article. If we now define
\begin{equation}\label{u0}
  u_0(\tau,\si)=U_{0N}(\tan\tau,\tan\si)\,,
\end{equation}
then
\begin{equation}\label{u0h}
  u_0(\tau,\si)=\exp[-i(\tau-\si)h_0]\,,\qquad h_0=\tfrac{1}{2}[\pv^2+\x^2]\,.
\end{equation}
The evolution of the free particle state over $t\in(-\infty,+\infty)$
corresponds now to the evolution over $\tau\in(-\pi/2,+\pi/2)$, thus over
half the period of the harmonic oscillator. We shall call this new	
description of evolution \emph{the harmonic  oscillator picture}.  The
scattering change of state is described in this picture by the operator
\begin{equation}\label{s}
  \slim_{\tau\nearrow\pi/2}u_0(+\tau,-\tau)=u_0(+\pi/2,-\pi/2)=\exp[-i\pi h_0]=iP\,,
\end{equation}
where $P$ is the parity operator $[P\chi](\x)=\chi(-\x)$. This formula is in
agreement with the asymptotic forms \eqref{stas}.

The transformation $N(t)$ leading from a~free particle to an harmonic
oscillator has been discovered much earlier by Niederer \cite{nie73}. It is
interesting to note, that his original derivation was a result of
group-theoretical considerations, with no relation to scattering theory.
Although the Niederer transformation has found numerous applications in
literature, to our knowledge it has not appeared in the scattering of a
Schr\"odinger particle context before. Which, if true, would be quite
surprising, if one notes its striking similarity to the asymptotic form
\eqref{stas}.

We shall later see that the above relation extends to a system placed in
external electromagnetic field: a charged particle dynamics becomes a charged
oscillator in the new picture (with electromagnetic potentials appropriately
transformed). For technical reasons, we find it convenient to start our
discussion in the oscillator picture, and only then transform into
Schr\"{o}dinger picture.

\section{Harmonic oscillator in electromagnetic field}\label{oscelmg}

We consider in $\Hc=L^2(\mR^3)$  the hamiltonian
\begin{equation}
 h=\tfrac{1}{2}(\piv^2+\x^2)+v\,,\quad \piv=\pv-\av\,,
\end{equation}
where $\av(\tau,\x)$ and $v(\tau,\x)$ are electromagnetic potentials. This
hamiltonian may be given the precise meaning for a wide class of potentials
(see \cite{ya11}), but for our purposes it is sufficient to consider the
following setting.

First note, that for vanishing potentials the corresponding `unperturbed'
harmonic oscillator hamiltonian  $h_0=\tfrac{1}{2}(\pv^2+\x^2)$ is determined
by the closed quadratic form\footnote{Conditions like $\pv\psi\in\Hc$ and
$\x\psi\in\Hc$ below, and similar other to appear further with vector
quantities on the lhs, are meant to hold component-wise.}
\begin{gather}
 q_0(\varphi,\psi)=\tfrac{1}{2}(\pv\varphi,\pv\psi)+\tfrac{1}{2}(\x\varphi,\x\psi)\,,\\
 \Dc(q_0)=\{\psi\in\Hc|\,\pv\psi\in \Hc, \x\psi\in \Hc\}=\Dc(\pv)\cap\Dc(\x)=\Dc(h_0^{1/2})\,\label{formq0}
\end{gather}
and the following relations hold
\begin{gather}
 \Dc(h_0)=\Dc(\pv^2)\cap\Dc(\x^2)\subseteq\Dc(\x\cdot\pv+\pv\cdot\x)\label{domh0}
\end{gather}
(see Appendix \ref{inequalities}).\footnote{It is quite surprising, that the
characterization of the harmonic oscillator domain given by the equality in
\eqref{domh0} is rather hard to find in standard textbook discussions of the
harmonic oscillator.} Moreover, $C^\infty_0(\mR^3)$ is a form-core and a core
for $h_0$ (see Thm.\,X.28 in \cite{rs79II}).

We also note an elementary inequality which will be used below. Let
$\psi(\tau)\in\Hc$ depend differentiably on a real parameter~$\tau$. Then
\begin {equation}\label{ineqdiff}
 |\p_\tau\|\psi(\tau)\||\leq\|\p_\tau\psi(\tau)\|\,.
\end{equation}
To show this, note that
$\p_\tau\|\psi(\tau)\|^2=2\Rp(\psi(\tau),\p_\tau\psi(\tau))$ and use the
Schwarz inequality for the rhs.

\begin{pr}\label{defh}
Let $\av$ and $v$ be in $L^\infty(\mR^3)$ (for each fixed time) and define
the quadratic form
\begin{equation}
 q(\varphi,\psi)=\tfrac{1}{2}(\piv\varphi,\piv\psi)+\tfrac{1}{2}(\x\varphi,\x\psi)
 +(\varphi,v\psi)\equiv q_0(\varphi,\psi)+\rho(\varphi,\psi)\,,
\end{equation}
with the form-domain $\Dc(q)=\Dc(q_0)$. Then $q$ is a closed form
corresponding to the unique self-adjoint operator $h$, for which
$C^\infty_0(\mR^3)$ is a form core. If~\mbox{$\al>\|v\|_\infty$}, then
$h+\al\mathbb{1}$ is positive, and
\mbox{$\Dc((h+\al\mathbb{1})^{1/2})=\Dc(q_0)$}.

If, in addition, $(\pav\cdot\av)$ is also in $L^\infty(\mR^3)$, then
\mbox{$\Dc(h)=\Dc(h_0)$}, and $C^\infty_0(\mR^3)$ is a domain of essential
self-adjointness of $h$.
\end{pr}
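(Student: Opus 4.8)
The plan is to establish the two assertions by the two standard perturbative criteria, KLMN for quadratic forms and Kato--Rellich for operators. For the first part I would compute $\rho=q-q_0$ explicitly: writing $\piv=\pv-\av$ and using that $v$ is real gives
\[
\rho(\psi,\psi)=-\Rp(\pv\psi,\av\psi)+\tfrac12\|\av\psi\|^2+(\psi,v\psi).
\]
The last two terms are bounded by $(\tfrac12\|\av\|_\infty^2+\|v\|_\infty)\|\psi\|^2$. For the cross term the key input is the elementary estimate $\|\pv\psi\|^2\le2q_0(\psi,\psi)$ coming directly from the definition of $q_0$; combined with Young's inequality this yields, for every $\delta>0$,
\[
|\Rp(\pv\psi,\av\psi)|\le\|\av\|_\infty\|\pv\psi\|\,\|\psi\|\le\|\av\|_\infty\Big(\delta\,q_0(\psi,\psi)+\tfrac{1}{2\delta}\|\psi\|^2\Big).
\]
Choosing $\delta$ small makes the relative form bound arbitrarily small, so $\rho$ is infinitesimally $q_0$-form-bounded. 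The KLMN theorem then gives that $q=q_0+\rho$ is closed and semibounded on $\Dc(q_0)$ and corresponds to a unique self-adjoint $h$; since the resulting form norms of $q$ and $q_0$ are equivalent, every $q_0$-form core, in particular $C^\infty_0(\mR^3)$, is also a $q$-form core.

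For the positivity claim I would simply note that for $\al>\|v\|_\infty$,
\[
q(\psi,\psi)+\al\|\psi\|^2=\tfrac12\|\piv\psi\|^2+\tfrac12\|\x\psi\|^2+(\psi,v\psi)+\al\|\psi\|^2\ge(\al-\|v\|_\infty)\|\psi\|^2\ge0,
\]
so $h+\al\mathbb{1}\ge0$ and $\Dc((h+\al\mathbb{1})^{1/2})$ coincides with the form domain $\Dc(q)=\Dc(q_0)$.

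For the second assertion I would pass from forms to operators. Expanding $\piv^2=(\pv-\av)^2$ and using $\pv\cdot(\av\psi)=-i(\pav\cdot\av)\psi+\av\cdot\pv\psi$ produces the operator identity $h=h_0+W$ with
\[
W=-\av\cdot\pv+\tfrac{i}{2}(\pav\cdot\av)+\tfrac12\av^2+v.
\]
Under the extra hypothesis $\pav\cdot\av\in L^\infty(\mR^3)$ the three zeroth-order terms are bounded multiplication operators, and the single first-order term is controlled on $\Dc(h_0)$ by $\|\av\cdot\pv\psi\|\le\|\av\|_\infty\|\pv\psi\|$ together with $\|\pv\psi\|^2\le2(\psi,h_0\psi)\le2\|\psi\|\,\|h_0\psi\|$ (here $\Dc(h_0)\subseteq\Dc(\pv)$ by \eqref{formq0}); a further Young inequality gives $\|\pv\psi\|\le\sqrt\delta\,\|h_0\psi\|+\delta^{-1/2}\|\psi\|$, so $W$ is symmetric and infinitesimally $h_0$-bounded. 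The Kato--Rellich theorem then makes $h_0+W$ self-adjoint and bounded below on $\Dc(h_0)$, with every core of $h_0$, in particular $C^\infty_0(\mR^3)$, a core for it.

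The step I expect to require the most care, and which I regard as the main obstacle, is the identification of this Kato--Rellich operator with the form operator $h$ of the first part, since the two are built by different routes. I would close the gap by observing that for $\varphi,\psi\in C^\infty_0(\mR^3)$ integration by parts (the potentials being real, the boundary terms vanishing) gives $q(\varphi,\psi)=(\varphi,(h_0+W)\psi)$; because $C^\infty_0(\mR^3)$ is a $q$-form core this extends, with $(h_0+W)\psi$ held fixed, to all $\varphi\in\Dc(q_0)$, so each such $\psi$ lies in $\Dc(h)$ with $h\psi=(h_0+W)\psi$. Thus $h$ and $h_0+W$ agree on the operator core $C^\infty_0(\mR^3)$ of $h_0+W$, and being both self-adjoint they coincide; hence $\Dc(h)=\Dc(h_0)$ and $C^\infty_0(\mR^3)$ is a core for $h$. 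The role of the new hypothesis $\pav\cdot\av\in L^\infty(\mR^3)$ is exactly to promote the formally first-order cross term to a genuine relatively bounded operator perturbation, which the purely form-level argument of the first part cannot provide.
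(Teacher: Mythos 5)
Your proposal is correct and follows essentially the same route as the paper: the KLMN theorem applied to the form perturbation $\rho$ for the first part, and the Kato--Rellich theorem applied to the operator perturbation $W=-\av\cdot\pv+\tfrac{i}{2}(\pav\cdot\av)+\tfrac12\av^2+v$ (the paper's $r$) for the second. The identification step you single out as the main obstacle is handled in the paper by noting directly that $\rho(\varphi,\psi)=(\varphi,r\psi)$ for all $\varphi\in\Dc(q_0)$, $\psi\in\Dc(h_0)$, which yields $h\supseteq h_0+r$ without passing through $C^\infty_0(\mR^3)$; your variant via the form core and essential self-adjointness reaches the same conclusion.
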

\begin{proof}
For $\varphi\in \Dc(q_0)$ one has
\begin{equation}\label{}
  |\rho(\varphi,\varphi)|\leq \|\av\varphi\|\|\pv\varphi\|
  +(\varphi,[\tfrac{1}{2}\av^2+|v|]\varphi)
  \leq \tfrac{1}{2}q_0(\varphi,\varphi)+c_1\|\varphi\|^2\,.
\end{equation}
Therefore, the first part of the theorem follows by the KLMN theorem (see
e.g.\ \cite{rs79II}, Thm.\,X.17). If, in addition, $(\pav\cdot\av)\in
L^\infty(\mR^3)$, then for $\varphi\in \Dc(q_0)$ and $\psi\in\Dc(h_0)$ we
have
 $\rho(\varphi,\psi)=(\varphi,r\psi)$ with
\begin{equation}\label{r}
 r=h-h_0=-\av\cdot\pv+\tfrac{i}{2}(\pav\cdot\av)+\tfrac{1}{2}\av^2+v\,,
\end{equation}
 so
\begin{equation}\label{estR}
 \|r\psi\| \leq (\|\av\|_\infty\|\psi\|^{1/2})(\|\pv\psi\|\|\psi\|^{-1/2})+c_2\|\psi\|
 \leq \tfrac{1}{2}\|h_0\psi\|+c_3\|\psi\|\,.
\end{equation}
Thus the second part of the thesis follows by the Kato-Rellich theorem.
\end{proof}

The existence of the corresponding evolution operators is assured under the
conditions of the following theorem. Here and in the rest of the article the
overdot denotes differentiation with respect to time in the oscillator
picture.
\begin{thm}\label{evolution}
For an interval $I\subset\mR$ let the functions $\av$, $\pav\cdot\av$,
$\dav$, $\pav\cdot\dav$, $v$ and $\dot{v}$ be in $L^\infty(I\times\mR^3)$.
Then for $\tau,\,\si\in I$:
\begin{itemize}
 \item[(i)] all $h(\tau)$ satisfy Proposition \ref{defh};
 \item [(ii)] there exists the unique unitary propagator $u(\tau,\si)$ for
     the family $h(\tau)$, strongly continuous in $(\tau,\si)$, with the
     following properties:
     \begin{itemize}
     \item[(a)] $u(\tau,\si)\Dc(h_0)=\Dc(h_0)$;
     \item[(b)] for $\psi\in\Dc(h_0)$ the map $(\tau,\si)\mapsto
         u(\tau,\si)\psi$ is of class $C^1$ in the strong sense and
     \begin{align}
     i\p_\tau u(\tau,\si)\psi&=h(\tau)u(\tau,\si)\psi\,,\label{hu}\\
     i\p_\si u(\tau,\si)\psi&=-u(\tau,\si)h(\si)\psi\,.\label{uh}
     \end{align}
     \end{itemize}
 \end{itemize}
\end{thm}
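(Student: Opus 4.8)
My plan is to verify, for the concrete family $h(\tau)$, the hypotheses of Kato's theorem on evolution equations --- following the application in \cite{ya11} --- taking the unperturbed oscillator $h_0=\tfrac12(\pv^2+\x^2)$ as reference operator and $\Dc(h_0)$, with its graph norm, as the common domain. Part (i) is immediate: for each fixed $\tau\in I$ the slices $\av(\tau,\cdot)$, $(\pav\cdot\av)(\tau,\cdot)$ and $v(\tau,\cdot)$ lie in $L^\infty(\mR^3)$, with norms bounded by the respective $L^\infty(I\times\mR^3)$ norms, so Proposition \ref{defh} applies verbatim and each $h(\tau)$ is self-adjoint on $\Dc(h(\tau))=\Dc(h_0)$ with $C^\infty_0(\mR^3)$ a core. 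Crucially, the estimate \eqref{estR} then holds with constants depending only on $\|\av\|_\infty$, $\|\pav\cdot\av\|_\infty$, $\|v\|_\infty$ over $I\times\mR^3$, so the graph norms of $h(\tau)$ and $h_0$ are \emph{uniformly} equivalent on $\Dc(h_0)$, and a single $\al>\|v\|_{L^\infty(I\times\mR^3)}$ makes every $h(\tau)+\al$ positive. Thus the $-ih(\tau)$ form a family of generators of unitary groups, with common domain and bounded below uniformly in $\tau$.

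Next I would establish the regularity of $\tau\mapsto h(\tau)$ that the dotted hypotheses are designed to supply. Writing $h(\tau)=h_0+r(\tau)$ with $r(\tau)$ as in \eqref{r}, the bounds $\dav,\pav\cdot\dav\in L^\infty(I\times\mR^3)$ (together with $\dot v$) show that $\tau\mapsto\av(\tau)$, $(\pav\cdot\av)(\tau)$ and $v(\tau)$ are Lipschitz in the $L^\infty(\mR^3)$-norm, since e.g.\ $\|\av(\tau)-\av(\si)\|_\infty\leq|\tau-\si|\,\|\dav\|_{L^\infty(I\times\mR^3)}$. Because each coefficient multiplies either $\pv$ or the identity, and $\|\pv\psi\|\leq\|h_0\psi\|+\|\psi\|$, this makes $\tau\mapsto r(\tau)$, hence $h(\tau)$, norm-continuous (indeed Lipschitz) as a map $\Dc(h_0)\to\Hc$. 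Differentiating \eqref{r} gives $\dot h(\tau)=-\dav\cdot\pv+\tfrac i2(\pav\cdot\dav)+\av\cdot\dav+\dot v$, and the same estimates show that $\tau\mapsto h(\tau)\psi$ is strongly $C^1$ for $\psi\in\Dc(h_0)$, with $\dot h(\tau)$ \emph{uniformly} $h_0$-bounded, $\|\dot h(\tau)\psi\|\leq a\|h_0\psi\|+b\|\psi\|$ for $a,b$ independent of $\tau$.

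With these ingredients I would invoke Kato's theorem to obtain the unique strongly continuous unitary propagator $u(\tau,\si)$ satisfying the forward equation \eqref{hu} and preserving $\Dc(h_0)$. The heart of the matter --- and the step I expect to be the main obstacle --- is the invariance (a) together with the a priori control of $\|h(\tau)u(\tau,\si)\psi\|$. The natural mechanism is an energy estimate: formally, for $\psi\in\Dc(h_0)$,
\begin{equation*}
 \p_\tau\|(h(\tau)+\al)u(\tau,\si)\psi\|^2=2\Rp\big(\dot h(\tau)u(\tau,\si)\psi,\,(h(\tau)+\al)u(\tau,\si)\psi\big),
\end{equation*}
because the contribution of $\p_\tau u=-ih\,u$ cancels by self-adjointness of $h(\tau)$; combined with the uniform relative bound on $\dot h(\tau)$ and Gronwall's inequality this yields $\|(h(\tau)+\al)u(\tau,\si)\psi\|\leq e^{c|\tau-\si|}\|(h(\si)+\al)\psi\|$, whence invariance of $\Dc(h_0)$. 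The point worth stressing is that this route uses only the uniform relative $h_0$-boundedness of $\dot h(\tau)$, and therefore never requires bounds on the full spatial gradient of $\av$ or on $\pav v$ --- quantities deliberately absent from the hypotheses, which would reappear if one tried instead to verify Kato's stability condition through a commutator $[h_0,r(\tau)]$. Making the estimate rigorous requires regularizing, e.g.\ replacing $h(\tau)$ by its bounded self-adjoint Yosida approximants $h(\tau)(1+\ep\,h(\tau)^2)^{-1/2}$, running the bound for the resulting bounded generators where all domains are unproblematic, and passing to the limit $\ep\to0$; this approximation-and-limit argument is the technical core.

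Finally, the backward equation \eqref{uh} and the joint $C^1$ regularity in $(\tau,\si)$ follow from the forward equation via the group laws $u(\tau,\si)u(\si,\rho)=u(\tau,\rho)$ and $u(\si,\tau)=u(\tau,\si)^*$: differentiating $u(\tau,\si)u(\si,\tau)=\id$ in $\si$ and using \eqref{hu} in the first slot gives $i\p_\si u(\tau,\si)=-u(\tau,\si)h(\si)$ on $\Dc(h_0)$, the already established invariance making $h(\si)\psi$ and its image meaningful. Uniqueness of the propagator is standard once \eqref{hu} holds on the common core $C^\infty_0(\mR^3)$.
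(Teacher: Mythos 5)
Your preparatory steps coincide with the paper's own proof: part (i) follows by applying Proposition \ref{defh} slice-wise with constants controlled by the $L^\infty(I\times\mR^3)$ norms; the Lipschitz bound $\|[h(\tau)-h(\si)]\psi\|\le|\tau-\si|\,(\tfrac12\|h_0\psi\|+c\|\psi\|)$ gives norm continuity of $h(\cdot):\Dc(h_0)\to\Hc$; and the uniform relative $h_0$-bound on $\dot h(\tau)$ is exactly the paper's estimate. The divergence is at the key step, and there your proposal has a genuine gap. The paper never proves the invariance (a) by an energy estimate along the evolution. Instead it verifies the remaining \emph{hypothesis} of Yajima's Theorem 3.2 (Kato's theorem), which is a statement about fixed vectors: with $\|\psi\|_\tau=[\|h(\tau)\psi\|^2+\|\psi\|^2]^{1/2}$ and the bound $\|\dot h(\tau)\psi\|\le\|h(\tau)\psi\|+c_5\|\psi\|$ (obtained from the $h_0$-bound by converting $\|h_0\psi\|\le2(\|h(\tau)\psi\|+c_3\|\psi\|)$ via \eqref{estR}), inequality \eqref{ineqdiff} gives, for fixed $\psi\in\Dc(h_0)$,
\[
 \big|\p_\tau\|\psi\|_\tau^2\big|\le 2\,\|h(\tau)\psi\|\,\|\dot h(\tau)\psi\|\le 2c\,\|\psi\|_\tau^2\,,
 \qquad\text{hence}\qquad \|\psi\|_\tau\le e^{c|\tau-\si|}\,\|\psi\|_\si\,.
\]
No propagator, no domain questions, no regularization enter this computation. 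Once this stability of the family of graph norms is in hand (together with norm continuity, and the automatic fact that $e^{-i\theta h(\tau)}$ is isometric in $\|\cdot\|_\tau$), Kato's theorem delivers as \emph{conclusions} everything in (ii): existence, uniqueness, $u(\tau,\si)\Dc(h_0)=\Dc(h_0)$, strong $C^1$ regularity, and both \eqref{hu} and \eqref{uh}.

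Your route treats the invariance (a) as something to be proved by hand through $\p_\tau\|(h(\tau)+\al)u(\tau,\si)\psi\|^2=2\Rp\big(\dot h\,u\psi,(h+\al)u\psi\big)$ plus Gronwall. As you concede, this is circular as written: it presupposes that $u(\tau,\si)\psi$ stays in $\Dc(h_0)$ and that $h(\tau)u(\tau,\si)\psi$ is differentiable, i.e.\ the very properties being established. The repair you propose --- Yosida approximants, propagators for the bounded generators, and a passage to the limit $\ep\to0$ in which invariance, the differential equations and unitarity all survive --- is precisely the internal mechanism of the proof of Kato's theorem; you label it ``the technical core'' but do not carry it out, so your argument in effect re-proves the cited theorem and stops where the real work begins. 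Two further points. First, the uniform equivalence of the graph norms of $h(\tau)$ and $h_0$ that you extract from \eqref{estR} is \emph{not} an adequate substitute for the stability hypothesis: a constant $C>1$ multiplies over the partition intervals in Kato's construction and blows up, whereas the exponential form $e^{c|\tau-\si|}$ telescopes; this is exactly what the fixed-vector differential inequality supplies. Second, your closing observation --- that one should avoid verifying stability through a commutator $[h_0,r(\tau)]$, which would demand bounds on the full gradient of $\av$ and on $\pav v$ --- is correct, but it is an argument for the paper's fixed-vector route, not for your evolution estimate: the inequality displayed above also uses nothing beyond the $\dot h$ bounds you already have.
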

\begin{proof}
 For $\psi\in \Dc(h_0)$ we have
\begin{equation}
 \dot{h}\psi=\dot{r}\psi=[-\dav\cdot\pv+\tfrac{i}{2}(\pav\cdot\dav)+\dav\cdot\av+\dot{v}]\psi\,,
\end{equation}
see \eqref{r}, so estimating as in \eqref{estR} we obtain
 $\|\dot{h}\psi\|\leq\tfrac{1}{2}\|h_0\psi\|+c_4\|\psi\|$ and
\begin{equation}\label{grcont}
 \|[h(\tau)-h(\si)]\psi\|\leq|\tau-\si|(\tfrac{1}{2}\|h_0\psi\|+c_4\|\psi\|)\,.
\end{equation}
Next, using \eqref{estR} we find
\begin{equation}
 \|h_0\psi\|=2\|[h-r]\psi\|-\|h_0\psi\|\leq2(\|h\psi\|+c_3\|\psi\|)\,,
\end{equation}
  so
\begin{equation}
 \|\dot{h}\psi\|\leq \|h\psi\|+c_5\|\psi\|\,.
\end{equation}
  Let us denote
\begin{equation}
 \|\psi\|_\tau\equiv[\|h(\tau)\psi\|^2+\|\psi\|^2]^{1/2}
\end{equation}
 --the graph norm of
$h(\tau)$. The use of \eqref{ineqdiff} now gives
\[
 |\p_\tau\|\psi\|_\tau^2|
 \leq 2\|h\psi\|\|\dot{h}\psi\|\leq2 c\|\psi\|_\tau^2
\]
for some $c>0$. This implies $\|\psi\|_\tau\leq e^{c|\tau-\si|}\|\psi\|_\si$,
and the way is paved for the application of the Kato theorem in the form
given by Theorem 3.2 in \cite{ya11}. Denote $\mathcal{Y}=\Dc(h_0)$ equipped
with the graph norm of $h_0$, and \mbox{$\mathcal{X}=\Hc$}.
Then~\eqref{grcont} says that $h(\tau):\mathcal{Y}\mapsto\mathcal{X}$ is
norm-continuous. Moreover, let $\mathcal{Y}_\tau$ be $\Dc(h_0)$ equipped with
the graph norm  $\|.\|_\tau$, and identify $A(\tau)=h(\tau)$. Then all
conditions of this theorem are satisfied and one obtains $u(\tau,\si)$ with
the stated properties.
\end{proof}

\section{From oscillator to Schr\"{o}dinger picture}\label{osctosch}

The way back to the Schr\"{o}dinger picture is achieved as follows.
\begin{thm}\label{OS}
 Let $\av$ and $v$ satisfy the assumptions of Theorem \ref{evolution} on each
 compact interval $I\subset(-\pi/2,+\pi/2)$. Denote
\begin{gather}
\begin{aligned}\label{AaVv}
 \Av(t,\x)&=\tm^{-1}\av(\arctan t,\tm^{-1}\x)\,,\\[1ex]
 V(t,\x)&=\tm^{-2}[v(\arctan t, \tm^{-1}\x)+\tm^{-1}t\x\cdot\av(\arctan t,\tm^{-1}\x)]\,,
 \end{aligned}\\[1ex]
 U(t,s)=N(t)u(\arctan t, \arctan s)N(s)^*\,.\label{Uu}
\end{gather}
Then $\mR^2\ni(t,s)\mapsto U(t,s)$ is a strongly continuous family of unitary
evolution operators, with the properties:
\begin{itemize}
\item[(i)] $U(t,s)\Dc(h_0)=\Dc(h_0)$;
\item[(ii)] for $\psi\in\Dc(h_0)$ the map $(t,s)\mapsto U(t,s)\psi$ is of
    class $C^1$ in the strong sense and
    \begin{equation}\label{evolU}
    i\p_tU(t,s)\psi=H(t)U(t,s)\psi\,,\quad
    i\p_sU(t,s)\psi=-U(t,s)H(s)\psi\,,
    \end{equation}
where
\begin{equation}
 H(t)=\tfrac{1}{2}\Piv(t)^2+V(t)\,,\quad \Piv=\pv-\Av\,.
\end{equation}
Operators $H(t)$ are essentially self-adjoint on $C^\infty_0(\mR^3)$.
\end{itemize}
The inverse relations of potentials are
\begin{align}\label{}
  \av(\tau,\x)&=\tm \Av(t,\tm\x)|_{t=\tan\tau}\,,\\
  v(\tau,\x)&=\tm^2\big[V(t,\tm\x)-(t/\tm)\x\cdot\Av(t,\tm\x)\big]|_{t=\tan\tau}\,.
\end{align}
\end{thm}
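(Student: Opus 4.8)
The plan is to transport every assertion from the oscillator propagator $u(\tau,\si)$ of Theorem~\ref{evolution} across the unitary conjugation \eqref{Uu} and the reparametrization $\tau=\arctan t$, for which $\tm=\sqrt{t^2+1}=1/\cos\tau$ and $d(\arctan t)/dt=\tm^{-2}$. Unitarity of $U(t,s)$ is immediate from that of $N(t)$, $u$ and $N(s)^*$; strong continuity follows from the strong continuity of $N(\cdot)$ and of $u$ together with the continuity of $\arctan$; and the cocycle identity $U(t,s)U(s,r)=U(t,r)$ follows from $N(s)^*N(s)=\id$ and the propagator law for $u$.

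For the domain statement (i) the only point is that $N(t)$ and $N(t)^*$ map $\Dc(h_0)=\Dc(\pv^2)\cap\Dc(\x^2)$ onto itself. Reading off \eqref{N}, $N(t)$ is the dilation $\x\mapsto\x/\tm$ followed by multiplication by the quadratic phase $\exp[\tfrac{it}{2}|\x|^2/\tm^2]$. The dilation preserves $\Dc(\pv^2)$ and $\Dc(\x^2)$; the phase preserves $\Dc(\x^2)$ trivially; and for $\Dc(\pv^2)$ one differentiates the phase twice, the only non-obvious cross term being controlled by the inclusion $\Dc(h_0)\subseteq\Dc(\x\ms\pv+\pv\ms\x)$ of \eqref{domh0}. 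With this and Theorem~\ref{evolution}(ii)(a),
\[
U(t,s)\Dc(h_0)=N(t)\,u(\arctan t,\arctan s)\,N(s)^*\Dc(h_0)=N(t)\Dc(h_0)=\Dc(h_0).
\]

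The substance of (ii) is the identification of the generator. For $\psi\in\Dc(h_0)$ the vector $N(s)^*\psi$ lies in $\Dc(h_0)$, the curve $u(\arctan t,\arctan s)N(s)^*\psi$ is strongly $C^1$ and remains in $\Dc(h_0)$ by Theorem~\ref{evolution}, and $N(t)$ is strongly $C^1$ on $\Dc(h_0)$ (the $t$-derivative again using \eqref{domh0} for the cross term); the product rule then gives strong $C^1$ regularity of $U(t,s)\psi$. Differentiating, using $d(\arctan t)/dt=\tm^{-2}$ and \eqref{hu}, and inserting $N(t)N(t)^*=\id$ to factor $U(t,s)\psi$ out on the right, I obtain
\[
i\p_t U(t,s)\psi=\Big[i\big(\p_tN(t)\big)N(t)^*+\tm^{-2}N(t)\,h(\arctan t)\,N(t)^*\Big]U(t,s)\psi.
\]
The first term is then removed by the identity built into the choice $\tm=\sqrt{t^2+1}$: formula \eqref{H0N} specializes to $N(t)^*H_0N(t)-iN(t)^*\p_tN(t)=\tm^{-2}h_0$, which conjugates to $i(\p_tN(t))N(t)^*=H_0-\tm^{-2}N(t)h_0N(t)^*$. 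Hence $H(t)=H_0+\tm^{-2}N(t)\,r(\arctan t)\,N(t)^*$ with $r$ as in \eqref{r}. Conjugating $r$ term by term with the explicit rules
\[
N(t)\,\x\,N(t)^*=\tm^{-1}\x,\qquad N(t)\,\pv\,N(t)^*=\tm\pv-(t/\tm)\x,\qquad N(t)\,f(\x)\,N(t)^*=f(\tm^{-1}\x),
\]
and substituting the definitions \eqref{AaVv}, the two $t\,\x\ms\Av$ contributions (one from the kinetic term, one absorbed into $V$) cancel and the remainder collapses to $-\tfrac12(\Av\ms\pv+\pv\ms\Av)+\tfrac12\Av^2+V$, so that $H(t)=\tfrac12\Piv^2+V$. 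The $s$-equation is the same computation using \eqref{uh}, and the inverse potential relations are the algebraic inversion of \eqref{AaVv} at $t=\tan\tau$.

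The step I expect to require the most care is the essential self-adjointness of $H(t)$ on $C^\infty_0(\mR^3)$, because Proposition~\ref{defh} does not apply directly: the scalar potential $V$ of \eqref{AaVv} grows linearly in $|\x|$ through the term $\tm^{-1}t\,\x\ms\av$, so $H(t)=\tfrac12\Piv^2+V$ is a magnetic Schr\"odinger operator with bounded $\Av$ and bounded $\pav\ms\Av$ but with a scalar potential of linear growth. As a cross-check one can transfer the question through the unitary $N(t)$, which maps $C^\infty_0(\mR^3)$ bijectively onto itself, to $N(t)^*H(t)N(t)=\tm^{-2}h(\arctan t)+iN(t)^*\p_tN(t)$; completing the square rewrites its principal part as $\tfrac1{2\tm^2}(\pv+t\x)^2$, a nonnegative operator unitarily equivalent via the gauge phase $\exp[\tfrac{it}{2}|\x|^2]$ to a multiple of $-\Delta$, while the remaining contributions from $\av$ and $v$ are the lower-order terms responsible for the linear growth. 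Essential self-adjointness on $C^\infty_0(\mR^3)$ then follows not from the Kato--Rellich argument of Proposition~\ref{defh} but from the standard criteria for Schr\"odinger operators whose potential is bounded below modulo a term of at most quadratic growth, which cover the present linearly growing $V$.
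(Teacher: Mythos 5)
Your proposal is correct and follows essentially the same route as the paper's proof: domain invariance of $\Dc(h_0)$ under $N(t)$ via \eqref{domh0}, identification of $H(t)=\tfrac{1}{2}\Piv^2+V$ through the conjugation rules $N(t)FN(t)^*=F(\x/\tm)$ and $N(t)\pv N(t)^*=\tm\pv-(t/\tm)\x$ (your split $h=h_0+r$ combined with the Niederer identity \eqref{H0N} is just a repackaging of the paper's direct computation, in which the $\x^2$ and $\x\cdot\pv+\pv\cdot\x$ terms cancel between $N(t)h N(t)^*$ and $-iN(t)\p_tN(t)^*$), and finally a self-adjointness criterion tolerant of the linear growth of $V$. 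The ``standard criteria'' you invoke at that last step are exactly what the paper makes precise by citing Theorem 4 of Leinfelder--Simader \cite{ls81}, so the two arguments coincide in substance.
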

\begin{proof}
Taking into account relations \eqref{domh0} one finds
$N(s)^*\Dc(h_0)=\Dc(h_0)$ and $N(t)\Dc(h_0)=\Dc(h_0)$, which proves (i). For
$\psi\in\Dc(h_0)$ and $F$--the operator of multiplication by the function
$F(\x)$, we have
\begin{gather}
 [N(t)FN(t)^*\psi](\x)=F(\x/\tm)\psi(\x)\,,\label{NFN}\\
 [N(t)\pv N(t)^*\psi](\x)=\big[\tm\pv-(t/\tm)\x\big]\psi(\x)\,,
\end{gather}
so a straightforward calculation gives
\begin{multline}
 [N(t)h(\arctan t)N(t)^*\psi](\x)\\= \tm^2\big[\tfrac{1}{2}(\pv-A(t,\x))^2+V(t,\x)\big]\psi(\x)
 +\tfrac{1}{2}[\x^2-t(\x\cdot\pv+\pv\cdot\x)]\psi(\x)\,,
\end{multline}
with $\Av$ and $V$ given in the thesis. On the other hand
\begin{equation}
 -i[N(t)\p_tN(t)^*\psi](\x)=\tfrac{1}{2}\tm^{-2}[-\x^2+t(\x\cdot\pv+\pv\cdot\x)]\psi(\x)\,.
\end{equation}
Thus equations \eqref{evolU} are satisfied with
\begin{equation}
 H(t)=\tm^{-2}N(t)h(\arctan t)N(t) -i N(t)\p_tN(t)=\tfrac{1}{2}\Piv(t)^2+V(t)\,.
\end{equation}
For each $t$ the norm $\|\Av(t,.)\|_\infty$ is finite, and if
$\|V(t,.)\|_\infty$ is finite as well, then the essential self-adjointness
follows easily in standard way (as in the proof of Proposition \ref{defh}).
In general, there is only $\|\xm^{-1}V(t,.)\|_\infty<\infty$ (due to the
$\x\cdot\av$ term in $V$), but then the use of the Leinfelder-Simader theorem
(Thm.\,4 in \cite{ls81}) leads to the same conclusion.
\end{proof}

\section{Electromagnetic fields and gauges}\label{fieldsgauges}

We shall now discuss in detail the relation between $(V,\Av)$ and $(v,\av)$
defined in \eqref{AaVv}. We assume that all differentiations to appear may be
performed, but in the following theorem the assumptions of Theorem~\ref{OS}
are not needed.
\begin{pr}\label{figa}
 Let $(V,\Av)$ and $(v,\av)$ be related by \eqref{AaVv}. Denote the electric
 and magnetic fields of these potentials by $(\Ev,\Bv)$ and $(\ev,\bv)$,
 respectively.
 Then the following is satisfied:
 \begin{itemize}
 \item[(i)] The gauge transformation
  \begin{equation}
 \Av_\Lambda(t,\x)=\Av(t,\x)-\pav\Lambda(t,\x)\,,\quad
 V_\Lambda(t,\x)= V(t,\x)+\p_t\Lambda(t,\x)\,,
 \end{equation}
 is equivalent to the transformation
 \begin{equation}
 \av_\la(\tau,\x)=\av(\tau,\x)-\pav\la(\tau,\x)\,,\quad
 v_\la(\tau,\x)=v(\tau,\x)+\p_\tau\la(\tau,\x)
 \end{equation}
 where $\lambda(\tau,\x)=\Lambda(t,\tm \x)|_{t=\tan\tau}$.
 \item[(ii)] The electromagnetic fields are related by:
 \begin{align}
 \ev(\tau,\x)&=\Big[\tm^3\Ev(t,\tm\x)+t\tm^2\x\times\Bv(t,\tm\x)\Big]_{t=\tan\tau}\,,\\
 \bv(\tau,\x)&=\tm^2\Bv(t,\tm\x)|_{t=\tan\tau}\,.
 \end{align}
 \item[(iii)] For given $(v,\av)$ choose the gauge function as
 \begin{equation}
   \la_\mathrm{e}(\tau,\x)=-\int_0^\tau v(\rho,\x)d\rho+\la_\mathrm{e}(0,\x)\,,
 \end{equation}
 where $\la_\mathrm{e}(0,\x)$ constitutes the remaining freedom in the
 definition. Then:
 \begin{equation}\label{figae}
 v_\mathrm{e}(\tau,\x)=0\,,\qquad \av_\mathrm{e}(\tau,\x)
 =\av_\mathrm{e}(0,\x)-\int_0^\tau\ev(\rho,\x)d\rho\,,
 \end{equation}
which we shall call an \emph{$\av$-gauge}. In these gauges the assumptions
of\linebreak Theorem~\ref{OS} are reduced to the following:
$\av_\mathrm{e}(0),\,\pav\cdot\av_\mathrm{e}(0)\in L^\infty(\mR^3)$ and
\mbox{$\ev,\,\pav\cdot\ev\in L^\infty(I\times\mR^3)$} for each compact
interval $I\subset (-\pi/2,+\pi/2)$.
\item[(iv)] In all $\av$-gauges the four-potential
    $A^a(x)=(V(t,\x),\Av(t,\x))$ satisfies
    \begin{equation}\label{xA}
    \hat{x}\cdot  A(x)=0\,,\qquad \hat{x}=x+((x^0)^{-1},\mathbf{0})
    \end{equation}
(there is no singularity at $x^0=0$).
 \end{itemize}
\end{pr}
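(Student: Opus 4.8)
The plan is to obtain all four parts by direct substitution into the defining relations \eqref{AaVv}, the only nontrivial ingredient being the chain rule for the reparametrization $t=\tan\tau$, for which $\tm=\sec\tau$, $dt/d\tau=\tm^2$ and $d\tm/d\tau=t\tm$. For part (i) I would insert $(V_\Lambda,\Av_\Lambda)$ into \eqref{AaVv} and read off $(\av_\la,v_\la)$. A spatial gradient acting on the $\x$-argument of $\Lambda(t,\tm\x)$ produces a factor $\tm$, so with $\la(\tau,\x)=\Lambda(t,\tm\x)$ one has $\pav_\x\la=\tm(\pav\Lambda)(t,\tm\x)$, and the inverse relation $\av=\tm\,\Av(t,\tm\x)$ turns $\Av_\Lambda=\Av-\pav\Lambda$ into $\av_\la=\av-\pav\la$. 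For the scalar part the chain rule applied to $\la(\tau,\x)=\Lambda(\tan\tau,\tm\x)$ gives
\begin{equation}
 \p_\tau\la=\tm^2(\p_t\Lambda)(t,\tm\x)+t\tm\,\x\cdot(\pav\Lambda)(t,\tm\x)\,,
\end{equation}
which is exactly the combination of inhomogeneous terms produced by $V_\Lambda=V+\p_t\Lambda$ and $\Av_\Lambda=\Av-\pav\Lambda$ in the formula for $v_\la$; hence $v_\la=v+\p_\tau\la$.

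For part (ii) the magnetic field is immediate: taking the curl of $\av=\tm\,\Av(t,\tm\x)$ in the $\x$-variable produces one factor $\tm$ from the chain rule on top of the prefactor, so $\bv=\tm^2\Bv(t,\tm\x)$. The electric field $\ev=-\pav_\x v-\p_\tau\av$ is the one genuine computation. I would evaluate the gradient of the $\x\cdot\av$ term in $v$ by the identity $\pav(\y\cdot\Av)=\Av+(\y\cdot\pav)\Av+\y\times(\pav\times\Av)$ (with $\y=\tm\x$ and $\pav$ acting on the field argument), and $\p_\tau\av$ by the chain rule above. The terms $\Av$ and $(\y\cdot\pav)\Av$ then cancel between $-\pav_\x v$ and $-\p_\tau\av$, the piece $\y\times(\pav\times\Av)$ assembles into $t\tm^2\,\x\times\Bv$, and what survives is $\tm^3[-\pav V-\p_t\Av](t,\tm\x)=\tm^3\Ev(t,\tm\x)$, which is the stated formula. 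This bookkeeping is the main obstacle; everything else is algebra.

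Part (iii) I would recognize as the temporal (Weyl) gauge. By part (i), $\p_\tau\la_\mathrm{e}=-v$ gives $v_\mathrm{e}=0$ at once, and then $\ev=-\pav v_\mathrm{e}-\p_\tau\av_\mathrm{e}=-\p_\tau\av_\mathrm{e}$ integrates to $\av_\mathrm{e}(\tau,\x)=\av_\mathrm{e}(0,\x)-\int_0^\tau\ev(\rho,\x)\,d\rho$, as claimed. The reduction of the hypotheses of Theorem~\ref{OS} follows because $\dav_\mathrm{e}=-\ev$: boundedness of $\av_\mathrm{e}(0),\pav\cdot\av_\mathrm{e}(0)$ together with $\ev,\pav\cdot\ev$ on $I\times\mR^3$ controls all of $\av_\mathrm{e},\pav\cdot\av_\mathrm{e},\dav_\mathrm{e},\pav\cdot\dav_\mathrm{e}$, while $v_\mathrm{e}=\dot{v}_\mathrm{e}=0$.

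For part (iv), in an $\av$-gauge the condition $v_\mathrm{e}=0$ collapses the forward relation \eqref{AaVv} for $V$ to
\begin{equation}
 V(t,\x)=\frac{t}{\tm^2}\,\x\cdot\Av(t,\x)\,,\qquad \tm^2=t^2+1\,,
\end{equation}
using $\av_\mathrm{e}(\arctan t,\tm^{-1}\x)=\tm\,\Av(t,\x)$. With the mostly-minus signature (so that $x^2>0$ is the lightcone interior), $\hat{x}\cdot A=(t+t^{-1})V-\x\cdot\Av$, and since $(t+t^{-1})(t/\tm^2)=(t^2+1)/\tm^2=1$ the two terms cancel, giving \eqref{xA}. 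The absence of a singularity at $x^0=0$ is built into the same identity: the explicit factor $t$ in $V$ makes $t^{-1}V=\tm^{-2}\x\cdot\Av$ regular as $t\to0$, so the pole of $\hat{x}$ is compensated.
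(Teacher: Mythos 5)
Your proposal is correct and takes essentially the same approach as the paper: the paper's proof simply states that all properties follow by direct calculation and leaves the details to the reader, and your chain-rule substitutions (including the vector identity handling the $\x\cdot\av$ term and the cancellation yielding $\tm^3\Ev + t\tm^2\,\x\times\Bv$) are exactly those calculations, carried out correctly.
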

\begin{proof}
All properties follow by simple calculations, which we leave to the reader.
\end{proof}

Preparing to discuss scattering, we need to formulate the asymptotic behavior
of potentials and fields: for upper case fields in the limit $|t|\to \infty$,
and the corresponding behavior of lower case fields in the limit
$|\tau|\to\pi/2$. It~will be convenient to denote
\begin{equation}\label{C}
 \Cv(t,\x)=t\Ev(t,\x)+\x\times\Bv(t,\x)\,,
\end{equation}
so then
\begin{equation}\label{e}
 \ev(\tau,\x)=\Big[t\tm\Cv(t,\tm\x)+\tm\Ev(t,\tm\x)\Big]_{t=\tan\tau}\,.
\end{equation}

In what follows we shall always use $\av$-gauges, so we only need to consider
the  fields $\Ev$, $\Cv$ and $\ev$. In Appendix \ref{aselm} we summarize the
asymptotic properties of electromagnetic fields in scattering settings in
Minkowski space. For $\Ev$ and $\Cv$ it will be sufficient to note
\begin{equation}\label{EC}
 |\Ev(t,\x)|\leq\con\,,\quad |\Cv(t,\x)|\leq\frac{\con}{1+|t|+|\x|}\,,
\end{equation}
while for divergence of $\Cv$ we shall need the bound
\begin{equation}\label{divC}
 |\pav\cdot\Cv(t,\x)|\leq\frac{\con}{(1+|t|+|\x|)(1+||t|-|\x||)^\ga}
\end{equation}
for some $1\geq\ga>0$. A straightforward calculation with the use of
\eqref{e} shows that these relations imply the following lower case fields
bounds:
\begin{equation}\label{ebounds}
 |\ev(\tau,\x)|\leq\frac{\con}{\cos\tau}\,,\quad
 |\pav\cdot\ev(\tau,\x)|\leq\frac{\con}{(\cos\tau)^{2-\ga}(\cos\tau+||\sin\tau|-|\x||)^\ga}\,,
\end{equation}
where in the last formula we have taken into account that
$\pav\cdot\Ev(t,\x)=4\pi\rho(t,\x)$ (the charge density), which gives a
contribution to $\pav\cdot\ev$ bounded by a constant, which is less
restricting than the term in \eqref{ebounds}.

While reading the next section, the reader is asked to note that the bounds
\eqref{ebounds} are sufficient to satisfy the assumptions of Theorems
\ref{aspos} and \ref{wave}, but for Theorem \ref{ascompl} the second bound in
\eqref{ebounds} would have to be replaced by
\begin{equation}\label{eboundstr}
 |\pav\cdot\ev(\tau,\x)|\leq\frac{\con}{(\cos\tau)^{2-\ga}}\,.
\end{equation}

\section{Scattering in oscillator picture}\label{scatteringosc}

Turning to the discussion of scattering, we start again with the oscillator
picture. With increasingly restrictive assumptions we shall prove:
\begin{itemize}
\item[(i)]~existence of asymptotic position,
\item[(ii)]~existence of wave operators, and
\item[(iii)]~their unitarity
    (asymptotic completeness).
\end{itemize}
From now on we shall use only $\av$-gauges defined by \eqref{figae}, and we
omit the subscript~$\mathrm{e}$.

For any operator $k=k(\tau)$ with $\Dc(h_0)\subseteq\Dc(k(\tau))$ we denote
\begin{equation}\label{tik}
 \tilde{k}=\tilde{k}(\tau)=u(0,\tau)k(\tau)u(\tau,0)\,,\quad \text{so}\quad
\Dc(h_0)\subseteq\Dc(\tilde{k}(\tau))\,.
\end{equation}
We assume that the assumptions of Theorem \ref{OS}, as reformulated in
Proposition \ref{figa} (iii), are satisfied. Then for
\mbox{$\psi\in\Dc(h_0)$}, following the remarks in Appendix \ref{difk}, one
finds
\begin{align}
 \dot{\hti}\psi&=\tfrac{1}{2}[\eti\cdot\piti+\piti\cdot\eti]\psi\,,\label{hti}\\
 \dot{\xti}\psi&=\piti\psi\,,\label{xti}\\
 \p_\tau f(\xti)\psi&=(\pav f)(\xti)\cdot\piti\psi-\tfrac{i}{2}(\Delta f)(\xti)\psi\,,\label{fxti}
\end{align}
where $f$ is any smooth bounded function with bounded derivatives. Using
\eqref{hti} we obtain
\begin{equation}
 |\p_\tau\|\hti^{1/2}\psi\|^2|=|(\psi,\dot{\hti}\psi)|
 \leq\|\eti\psi\|\|\piti\psi\|
 \leq\sqrt{2}\|\eti\psi\|\|\hti^{1/2}\psi\|\,,
\end{equation}
so
\begin{equation}
 |\|\hti(\tau)^{1/2}\psi\|-\|h(0)^{1/2}\psi\||
 \leq\tfrac{1}{\sqrt{2}}\int_0^\tau\|\eti(\si)\psi\| d\si\,.
\end{equation}
It follows, in particular, that
\begin{equation}\label{pibound}
 \|\piti(\tau)\psi\|\leq \sqrt{2}\,\|h(0)^{1/2}\psi\|
 +\int_0^\tau\|\ev(\si,.)\|_\infty d\si\|\psi\|\,.
\end{equation}

\begin{thm}\label{aspos}
Let the electromagnetic fields and potentials satisfy the assumptions of
Theorem \ref{OS} in the form given in Proposition \ref{figa} (iii). If the
field $\ev$ satisfies
\begin{equation}\label{posas}
 \int_{-\pi/2}^{\pi/2}\big(\tfrac{\pi}{2}-|\tau|\big)\|\ev(\tau,.)\|_\infty\,d\tau<\infty \,,\\
\end{equation}
then there exist the asymptotic position operators $\xti_\pm$ defined by
\begin{equation}\label{pos}
 \exp{\big[i\qv\cdot \xti_\pm]}=\slim_{\tau\to\pm\pi/2}\exp{\big[i\qv\cdot\xti(\tau)\big]}
\end{equation}
for all numerical vectors $\qv$. The limit then holds for any Schwartz
function.
\end{thm}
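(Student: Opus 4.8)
The plan is to establish the strong limits \eqref{pos} first on the dense domain $\Dc(h_0)$ by a Cauchy criterion in $\tau$, then to extend them to all of $\Hc$ by uniform boundedness, and finally to pass from the exponentials $\exp[i\qv\cdot\xti(\tau)]$ to arbitrary Schwartz functions by Fourier synthesis. I treat the limit $\tau\nearrow+\pi/2$; the case $\tau\searrow-\pi/2$ is identical after reflecting $\tau\mapsto-\tau$, and is covered by the same assumption \eqref{posas} because of the $|\tau|$ there. The starting point is the differentiation formula \eqref{fxti}. Applying it with $f(\x)=\exp[i\qv\cdot\x]$, for which $\pav f=i\qv f$ and $\Delta f=-|\qv|^2 f$, gives for $\psi\in\Dc(h_0)$
\begin{equation}
 \p_\tau\exp[i\qv\cdot\xti(\tau)]\psi
 =i\exp[i\qv\cdot\xti(\tau)](\qv\cdot\piti)\psi+\tfrac{i}{2}|\qv|^2\exp[i\qv\cdot\xti(\tau)]\psi\,,
\end{equation}
and since $\exp[i\qv\cdot\xti(\tau)]$ is unitary this yields the norm bound
\begin{equation}
 \big\|\p_\tau\exp[i\qv\cdot\xti(\tau)]\psi\big\|\leq|\qv|\,\|\piti(\tau)\psi\|+\tfrac{1}{2}|\qv|^2\|\psi\|\,.
\end{equation}

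The second term is trivially integrable over the finite interval $(0,\pi/2)$, so everything hinges on $\int_0^{\pi/2}\|\piti(\tau)\psi\|\,d\tau$. Here I would insert the bound \eqref{pibound} and integrate in $\tau$, interchanging the order of the resulting double integral:
\begin{equation}
 \int_0^{\pi/2}\|\piti(\tau)\psi\|\,d\tau
 \leq\tfrac{\pi}{\sqrt2}\|h(0)^{1/2}\psi\|
 +\|\psi\|\int_0^{\pi/2}\big(\tfrac{\pi}{2}-\si\big)\|\ev(\si,.)\|_\infty\,d\si\,,
\end{equation}
the last integral being finite precisely by the hypothesis \eqref{posas} (on $[0,\pi/2]$ one has $\tfrac{\pi}{2}-\si=\tfrac{\pi}{2}-|\si|$). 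Thus $\tau\mapsto\p_\tau\exp[i\qv\cdot\xti(\tau)]\psi$ is norm-integrable near $\pi/2$. Since the map $(\tau,\si)\mapsto u(\tau,\si)\psi$ is $C^1$ on $\Dc(h_0)$ by Theorem~\ref{evolution}(ii), the fundamental theorem of calculus gives, for $\tau_1,\tau_2\nearrow\pi/2$,
\begin{equation}
 \big\|\exp[i\qv\cdot\xti(\tau_2)]\psi-\exp[i\qv\cdot\xti(\tau_1)]\psi\big\|
 \leq\int_{\tau_1}^{\tau_2}\big\|\p_\tau\exp[i\qv\cdot\xti(\tau)]\psi\big\|\,d\tau\longrightarrow0\,,
\end{equation}
so the limit in \eqref{pos} exists for every $\psi\in\Dc(h_0)$.

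To finish, I would extend this to all of $\Hc$: the operators $\exp[i\qv\cdot\xti(\tau)]$ are uniformly bounded (unitary) and $\Dc(h_0)$ is dense, so the strong limit exists on $\Hc$ and defines an isometry $\exp[i\qv\cdot\xti_\pm]$; running the same argument for $-\qv$ and using that strong convergence of uniformly bounded families is preserved under products, the group relation $\exp[i\qv\cdot\xti_\pm]\exp[-i\qv\cdot\xti_\pm]=\id$ survives in the limit, so the limit is unitary. Strong continuity in $\qv$ (which follows from the same uniform estimates) together with Stone's theorem then produces the commuting self-adjoint components of $\xti_\pm$. For the final assertion, given $f\in\Sc(\mR^3)$ I would write $f(\xti(\tau))=(2\pi)^{-3/2}\int\hat f(\qv)\exp[i\qv\cdot\xti(\tau)]\,d^3q$ as a Bochner integral; since $\hat f\in L^1$ and $\|\hat f(\qv)\exp[i\qv\cdot\xti(\tau)]\psi\|=|\hat f(\qv)|\,\|\psi\|$ is a $\qv$-integrable dominating function while each integrand converges by the above, dominated convergence yields $f(\xti(\tau))\psi\to f(\xti_\pm)\psi$.

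The one genuine subtlety, and the place where the hypothesis does real work, is that $\|\piti(\tau)\psi\|$ need \emph{not} stay bounded as $\tau\to\pi/2$: the bound \eqref{pibound} may diverge there, since $\|\ev(\tau,.)\|_\infty$ behaves like $1/\cos\tau$ (cf.\ \eqref{ebounds}) and is itself non-integrable at $\pi/2$. Consequently one cannot majorize the derivative by a fixed integrable function of $\tau$ directly. The resolution is the order interchange above, which converts the offending inner integral of $\|\ev\|_\infty$ into the weighted integral $\int(\tfrac{\pi}{2}-|\si|)\|\ev(\si,.)\|_\infty\,d\si$; recognizing that the weight $(\tfrac{\pi}{2}-|\tau|)$ in \eqref{posas} is exactly tailored to absorb this double integration is the crux of the argument.
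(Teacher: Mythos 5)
Your proposal is correct and follows essentially the same route as the paper: the derivative bound obtained from \eqref{fxti} with $f(\x)=e^{i\qv\cdot\x}$ (the paper's \eqref{exppi}), the insertion of \eqref{pibound}, and the interchange of the double integral (the paper's integration by parts \eqref{parts}) to reduce integrability to the weighted hypothesis \eqref{posas}, followed by density and the standard Reed--Simon argument (Problem VIII.23) for producing the self-adjoint $\xti_\pm$. Your extra details on unitarity of the limit, Stone's theorem, and the Bochner-integral extension to Schwartz functions merely spell out what the paper leaves to the reader.
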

\begin{proof}
Using equation \eqref{fxti} we obtain
\begin{equation}\label{exppi}
 \big\|\p_\tau\exp{\big[i\qv\cdot\xti(\tau)\big]}\psi\big\|
 =\|[\qv\cdot\piti(\tau)+\tfrac{1}{2}\qv^2]\psi\|
 \leq|\qv|\|\piti(\tau)\psi\|+\tfrac{1}{2}|\qv|^2\|\psi\|\,.
\end{equation}
Noting that the integration by parts gives
\begin{equation}\label{parts}
 \int_0^{\pi/2}\int_0^\tau\|\ev(\si,.)\|_\infty d\si d\tau
 \leq\int_0^{\pi/2}\big(\tfrac{\pi}{2}-\tau\big)\|\ev(\tau,.)\|_\infty d\tau\,,
\end{equation}
we observe that the rhs of \eqref{pibound} and \eqref{exppi} are integrable
over $[ 0,\pi/2]$. Therefore, the limit
$\dsp\lim_{\tau\to\pi/2}\exp[i\qv\cdot\xti(\tau)]\psi$ exists for each
$\psi\in\Dc(h_0)$, hence for all~$\psi$. The existence of a self-adjoint
operator $\xti_+$ satisfying the limiting relation \eqref{pos}
follows.\footnote{This is a simple consequence, which is left to the reader
as Problem VIII.23 in \cite{rs75I}.} The case of $\xti_-$ is similar.
\end{proof}

We now turn to the existence of wave operators.
\begin{thm}\label{wave}
Let the electromagnetic fields and potentials satisfy again the assumptions
given in Proposition \ref{figa} (iii). Suppose that
\begin{equation}\label{wave1}
 \int_{-\pi/2}^{\pi/2}\big(\tfrac{\pi}{2}-|\tau|\big)^2\|\ev(\tau,.)^2\|_\infty\,d\tau<\infty
\end{equation}
and for each closed set $K\subset \mR^3\setminus\{\x: \x^2=1\}$, with
notation $\|.\|_{K,\infty}\equiv\|.\|_{L^\infty(K)}$, there is
\begin{equation}\label{wave2}
 \int_{-\pi/2}^{\pi/2}\big(\tfrac{\pi}{2}-|\tau|\big)\|\pav\cdot\ev(\tau,.)\|_{K,\infty}\,d\tau<\infty\,.
\end{equation}
Then there exist isometric wave operators
\begin{equation}\label{waveop}
 u(0,\pm \pi/2)\equiv \slim_{\tau\to\pm \pi/2}u(0,\tau)\equiv \w_{\pm}\,,
\end{equation}
such that for each bounded, continuous function $f(\x)$ with support not
intersecting $\x^2=1$ there is
\begin{equation}\label{wavepos}
 f(\xti_\pm)=\w_\pm f(\x)\w_\pm^*\,.
\end{equation}
It follows that
\begin{equation}\label{isom}
 \w_\pm \w_\pm^*=E_\pm\equiv\mathbb{1}-\mathbb{1}_{\{1\}}(\xti_\pm^2)
\end{equation}
and
\begin{equation}\label{wave3}
 \slim_{\tau\to\pm\pi/2}u(\tau,0)E_\pm=\w_\pm^*\,.
\end{equation}
\end{thm}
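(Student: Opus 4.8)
The plan is to construct $\w_\pm$ by comparison with the free oscillator evolution $e^{-i\tau h_0}$ of \eqref{u0h}, whose endpoint limits $\slim_{\tau\to\pm\pi/2}e^{i\tau h_0}=e^{\pm i\pi h_0/2}$ exist trivially because $h_0$ has purely discrete spectrum (cf.\ \eqref{s}). Since $u(0,\tau)=\big[u(0,\tau)e^{-i\tau h_0}\big]e^{i\tau h_0}$, it suffices to prove that the modified operators $\W_\pm:=\slim_{\tau\to\pm\pi/2}u(0,\tau)e^{-i\tau h_0}$ exist; then $\w_\pm=\W_\pm e^{\pm i\pi h_0/2}$, and being a strong limit of unitaries each $\w_\pm$ is automatically isometric, $\|\w_\pm\psi\|=\lim_\tau\|u(0,\tau)\psi\|=\|\psi\|$. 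Differentiating with \eqref{hu}, \eqref{uh} and \eqref{r} (with $v=0$ in the $\av$-gauge) gives
\[
 \p_\tau\big[u(0,\tau)e^{-i\tau h_0}\big]\psi=i\,u(0,\tau)\,r(\tau)\,e^{-i\tau h_0}\psi,\qquad r(\tau)=-\av\cdot\pv+\tfrac{i}{2}(\pav\cdot\av)+\tfrac12\av^2,
\]
so by Cook's criterion the existence of $\W_\pm$ reduces to $\int^{\pm\pi/2}\|r(\tau)e^{-i\tau h_0}\psi\|\,d\tau<\infty$ on a dense set of $\psi$.

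For that dense set I would take the $\psi$ whose momentum support is compact and avoids the sphere $|\pv|=1$. The Heisenberg relation $\x(\tau)=\x\cos\tau+\pv\sin\tau$ for $h_0$ then shows that $e^{-i\tau h_0}\psi$ is, for $\tau$ near $\pm\pi/2$, spatially concentrated near $|\x|=|\pv|\neq1$, i.e.\ away from the light cone $\x^2=1$ on which the potentials are worst; away from the endpoints $\|\ev(\tau,.)\|_\infty$ and $\av(\tau)$ are bounded and cause no trouble. On this localized family the divergence term $\tfrac{i}{2}(\pav\cdot\av)=\tfrac{i}{2}\big(\pav\cdot\av(0)-\int_0^\tau\pav\cdot\ev\big)$ is estimated by the off--light-cone bound \eqref{wave2}, while the borderline terms $\av\cdot\pv$ and $\tfrac12\av^2$, in which $\av(\tau)=\av(0)-\int_0^\tau\ev$ grows like a negative power of $\cos\tau$, are controlled by the $L^2$-in-time hypothesis \eqref{wave1}. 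I expect this estimate to be the main obstacle: the growth of the potentials sits exactly at the threshold permitted by \eqref{wave1}--\eqref{wave2}, so a crude bound by $\|\av\|_\infty^2$ is already non-integrable and genuine use of the propagation (and spatial decay) of the oscillator state is unavoidable; this is also why the light-cone restriction must enter and why \eqref{wave2}, rather than the stronger \eqref{eboundstr}, is all that can be afforded here.

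Granting existence and isometry, the intertwining relation follows from Theorem \ref{aspos}. The convergence \eqref{pos} makes the commuting family $\xti(\tau)$ converge to $\xti_\pm$ in the strong resolvent sense, hence $f(\xti(\tau))=u(0,\tau)f(\x)u(\tau,0)\to f(\xti_\pm)$ strongly for every bounded continuous $f$. Applying this to $u(0,\tau)\psi$ and using $u(0,\tau)\to\w_\pm$ gives $f(\xti(\tau))u(0,\tau)\psi\to f(\xti_\pm)\w_\pm\psi$ on one side and $f(\xti(\tau))u(0,\tau)\psi=u(0,\tau)f(\x)\psi\to\w_\pm f(\x)\psi$ on the other, whence $f(\xti_\pm)\w_\pm=\w_\pm f(\x)$. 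Letting $f\nearrow\mathbb{1}$ off the sphere yields $E_\pm\w_\pm=\w_\pm$, i.e.\ $\w_\pm\w_\pm^*\leq E_\pm$ --- one half of \eqref{isom}.

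The remaining content --- the reverse inclusion $\mathrm{ran}\,\w_\pm\supseteq\mathrm{ran}\,E_\pm$, which is equivalent to the full form of \eqref{wavepos} and yields \eqref{wave3} --- I would get from a backward estimate. For $\eta$ in the dense subset of $\mathrm{ran}\,E_\pm$ whose $\xti_\pm$-spectral support is compact and disjoint from $\x^2=1$, choose $f\equiv1$ there; then $f(\xti(\tau))\eta\to\eta$, and rewritten as $(\mathbb{1}-f(\x))u(\tau,0)\eta\to0$ this shows that the interacting backward orbit $u(\tau,0)\eta$ becomes localized away from $\x^2=1$ as $\tau\to\pm\pi/2$. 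That localization is exactly what is needed to run the Cook estimate for $e^{i\tau h_0}u(\tau,0)E_\pm$ in the backward direction, the divergence term again being tamed off the light cone by \eqref{wave2}; this produces the strong limit $S_\pm:=\slim_\tau u(\tau,0)E_\pm$ on the dense subset, extended to $\mathrm{ran}\,E_\pm$ by $\|u(\tau,0)\|=1$. A one-line weak computation, $\langle\phi,S_\pm\eta\rangle=\lim_\tau\langle u(0,\tau)\phi,E_\pm\eta\rangle=\langle\w_\pm\phi,E_\pm\eta\rangle=\langle\phi,\w_\pm^*\eta\rangle$, identifies $S_\pm=\w_\pm^*$, which is \eqref{wave3}. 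Finally, for $\eta=E_\pm\eta$ the vector $\psi:=\w_\pm^*\eta$ obeys $\w_\pm\psi=\eta$, since $\|\w_\pm\psi-\eta\|\leq\|\w_\pm\psi-u(0,\tau)\psi\|+\|\psi-u(\tau,0)E_\pm\eta\|\to0$; this closes the reverse inclusion and completes \eqref{isom} and \eqref{wavepos}.
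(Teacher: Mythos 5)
Your architecture (forward Cook estimate against $e^{-i\tau h_0}$ for existence, intertwining via Theorem \ref{aspos}, backward Cook estimate for the range) contains one genuinely nice simplification: taking strong limits in the exact identity $f(\xti(\tau))u(0,\tau)=u(0,\tau)f(\x)$ to get $f(\xti_\pm)\w_\pm=\w_\pm f(\x)$ is cleaner than the paper's route. But both Cook estimates, which carry all the analytic weight, have an unfillable gap, and it is exactly the term you flag as ``the main obstacle'': the divergence term $\tfrac{i}{2}(\pav\cdot\av)$. Hypothesis \eqref{wave2} controls $\pav\cdot\ev$ (hence, after integrating the gauge relation \eqref{figae}, $\pav\cdot\av$) \emph{only} in the norms $\|\cdot\|_{K,\infty}$ with $K$ compact and disjoint from the sphere $\x^2=1$; near that sphere nothing at all is assumed beyond boundedness for $\tau$ in compact subintervals of $(-\pi/2,\pi/2)$. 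For instance, a field with $\pav\cdot\ev$ supported in the shrinking shell $||\x|-1|\leq\cos\tau$ with arbitrarily fast growing amplitude is compatible with \eqref{wave1}--\eqref{wave2}. On the other hand, your evolved states always have nonzero mass in that shell: for $\psi$ with compactly supported $\hat\psi$, the vector $e^{-i\pi h_0/2}\psi$ is compactly supported in position, and then the Mehler kernel shows that $e^{-i\tau h_0}\psi$ for $\tau\neq\pm\pi/2$ is a chirp times a real-analytic function, hence vanishes on no open set. No rate of tail decay (rapid or even exponential) can beat an unconstrained growth of the potential, so the integrability $\int\|r(\tau)e^{-i\tau h_0}\psi\|\,d\tau<\infty$ simply does not follow from the stated hypotheses; the same objection kills the backward step, where $\|(\mathbb{1}-f(\x))u(\tau,0)\eta\|\to0$ holds at an unknown rate. (A side error: your claim that a crude bound by $\|\av\|_\infty^2$ is ``already non-integrable'' is wrong. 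The paper shows $\int\|\av(\tau,.)\|_\infty^2\,d\tau<\infty$ follows from \eqref{wave1} by integration by parts plus the Schwarz inequality, so the $\av\cdot\pv$ and $\tfrac12\av^2$ terms are harmless even with global sup-norm bounds; the divergence term is the \emph{only} critical one.)

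The paper's proof is built precisely to avoid this trap, and the key structural point --- the raison d'\^{e}tre of the oscillator picture, stated in the abstract --- is that \emph{no comparison dynamics is needed at all}. Since the time interval is finite, one fixes a smooth $\psi$ with compact support $K$ disjoint from $\x^2=1$ and applies \eqref{uh} directly: $\|\p_\tau u(0,\tau)\psi\|=\|h(\tau)\psi\|$, where $h(\tau)$ acts on the \emph{fixed} vector $\psi$. Then $\|h(\tau)\psi\|\leq\|h_0\psi\|+\|\av(\tau,.)\|_\infty\|\pav\psi\|+\tfrac12\big(\|\av(\tau,.)\|_\infty^2+\|\pav\cdot\av(\tau,.)\|_{K,\infty}\big)\|\psi\|$: the constant $\|h_0\psi\|$ is integrable because the interval is finite, and the divergence term is evaluated exactly on $K=\mathrm{supp}\,\psi$, with no tails --- this is what \eqref{wave2} is designed for. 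For the range statements the paper proves $\slim_{\tau\to\pm\pi/2}f(\x)u(\tau,0)=f(\x)\w_\pm^*$ via the Duhamel formula for $u_0(0,\tau)f(\x)u(\tau,0)\psi$, in which the cutoff $f$ multiplies the potentials inside $fh(\si)-h_0f$, so every dangerous term again carries the $K$-norm; the free evolution enters only as a unitary prefactor whose endpoint limit is trivial. If you want to salvage your scheme, the spatial cutoff must sit directly against the potential, as in the paper, not against the state; counteracting the free dynamics is here both unnecessary and fatal.
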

\begin{proof}
Let $\psi$ be any smooth function with compact support $K$ outside
\mbox{$\x^2=1$}. If we show that $\|h(\tau)\psi\|$ is integrable over
$[-\pi/2,\pi/2]$, then by Eq.\eqref{uh} $u(0,\tau)\psi$ converges for
$\tau\to\pm\pi/2$. As the assumed class of functions is dense in $\Hc$, the
existence of $\w_\pm$ will be achieved. Now,
\begin{multline}\label{hbound}
 \|h(\tau)\psi\|\leq\|h_0\psi\|+\|\av(\tau)\cdot\pav\psi\|+\tfrac{1}{2}\|(\pav\cdot\av)(\tau)\psi\|
 +\tfrac{1}{2}\|\av(\tau)^2\psi\|\\
 \leq\|h_0\psi\|+\|\av(\tau,.)\|_\infty\|\pav\psi\|+
 \tfrac{1}{2}\big(\|\av(\tau,.)\|^2_\infty+\|\pav\cdot\av(\tau,.)\|_{K,\infty}\big)\|\psi\|\,.
\end{multline}
Using the Schwarz inequality and the gauge \eqref{figae} we have
\begin{equation}
 \frac{2}{\pi}\bigg[\int\limits_0^{\pi/2}\|\av(\tau,.)\|_\infty d\tau\bigg]^2
 \leq\int\limits_0^{\pi/2}\|\av(\tau,.)\|^2_\infty d\tau
 \leq \pi\|\av(0,.)\|_\infty^2+2\int\limits_0^{\pi/2}n(\tau)^2d\tau\,,
\end{equation}
where $n(\tau)=\int_0^\tau \|\ev(\si,.)\|_\infty d\si$. Integrating by parts
and then using the Schwarz inequality we obtain
\begin{multline}
 \int_0^{\pi/2}n(\tau)^2d\tau\leq2\int_0^{\pi/2}(\tfrac{\pi}{2}-\tau)n(\tau)\dot{n}(\tau)d\tau\\
 \leq 2\sqrt{\int_0^{\pi/2}n(\tau)^2d\tau}\sqrt{\int_0^{\pi/2}(\tfrac{\pi}{2}-\tau)^2\dot{n}(\tau)^2d\tau}
 \leq 4\int_0^{\pi/2}(\tfrac{\pi}{2}-\tau)^2\|\ev(\tau,.)\|_\infty^2d\tau\,,
\end{multline}
which shows that the second and third terms on the rhs of \eqref{hbound} are
integrable over $[0,\pi/2]$. Finally, integrating by parts as in
\eqref{parts} we find
\begin{equation}
 \int_0^{\pi/2}\|\pav\cdot\av(\tau,.)\|_{K,\infty}d\tau
 \leq \tfrac{\pi}{2}\|\pav\cdot\av(0,.)\|_{K,\infty}
 +\int_0^{\pi/2}(\tfrac{\pi}{2}-\tau)\|\pav\cdot\ev(\tau,.)\|_{K,\infty}d\tau\,,
\end{equation}
which completes the proof of the existence of $\w_+$ \eqref{waveop}. The case
of $\w_-$ is analogous.

The proof of the remaining claims is based on the following property, to be
proved below: for each smooth function $f(\x)$ with compact support $K$ not
intersecting $\x^2=1$, there is
\begin{equation}\label{smoothfw}
 \slim_{\tau\to\pm\pi/2}f(\x)u(\tau,0)=f(\x)\w_\pm^*\,.
\end{equation}
Once this is proved, we have
\begin{equation}
 f(\xti_\pm)=\slim_{\tau\to\pm\pi/2}u(0,\tau)f(\x)u(\tau,0)=\w_\pm f(\x)\w_\pm^*\,,
\end{equation}
where the first equality is the result of Theorem \ref{aspos}. Each bounded
continuous function with support outside $\x^2=1$, as well as the
characteristic function of the set $\mR^3\setminus\{\x|\x^2=1\}$, is a
point-wise limit of the above functions, so \eqref{wavepos} and \eqref{isom}
follow (for the latter note that $\mathbb{1}_{\{1\}}(\x^2)=0$). Finally,
there is
\begin{equation}
 \|u(\tau,0)E_\pm\psi-\w_\pm^*\psi\|^2
 =2\|E_\pm\psi\|^2-2\Rp(E_\pm\psi,u(0,\tau)\w_\pm^*\psi)\to 0
\end{equation}
for $\tau\to\pm\pi/2$, so \eqref{wave3} follows.

Turning to the proof of the remaining property \eqref{smoothfw} we use
\eqref{hu} to obtain for $\psi\in\Dc(h_0)$:
\begin{equation}\label{ufu}
 u_0(0,\tau)f(\x)u(\tau,0)\psi=f(\x)\psi
 -i\int_0^\tau u_0(0,\si)[fh(\si)-h_0f]u(\si,0)\psi d\si\,,
\end{equation}
where $u_0$ and $h_0$ are unperturbed operators, and without restricting
generality we assume $\|f\|_\infty\leq1$. Now,
\[
 fh-h_0f=(i\pav f-f\av)\cdot\piv+\tfrac{1}{2}f(i\pav\cdot\av-\av^2)
 +i(\pav f\cdot \av)+\tfrac{1}{2}(\Delta f)\,.
\]
The norm of the integrand in \eqref{ufu} may be thus bounded as
\begin{multline}
 \|[fh(\si)-h_0f]u(\si,0)\psi\|
 \leq(\|\pav f\|_\infty+\|\av(\si,.)\|_\infty)\|\piti(\si)\psi\|\\
 +\tfrac{1}{2}\Big[\|\pav\cdot\av(\si,.)\|_{K,\infty}
 +\|\av(\si,.)\|_\infty^2
 +2\|\pav f\|_\infty\|\av(\si,.)\|_\infty+\|\Delta f\|_\infty\Big]\|\psi\|\,.
\end{multline}
The integrability on $[0,\pi/2]$ of the terms in the second line follows from
the proof of the existence of $\w_+$, and of the first term in the first
line--from the proof of Thm.\ \ref{aspos}. Finally, the proof of
integrability of $\|\av(\si,.)\|_\infty\|\piti(\si)\psi\|$ is very similar to
the case of $\|\av(\si,.)\|^2_\infty$ (see relations \eqref{figae} and
\eqref{pibound}). This shows that the lhs of \eqref{ufu} has a limit for
$\tau\to\pi/2$. But $u_0(0,\tau)$ has a unitary strong limit operator
$u_0(0,\pi/2)$, so $f(\x)u(\tau,0)\psi$ converges strongly to some
vector~$\chi$. It follows that for each vector $\varphi$ there is
\begin{equation}
 (\varphi,\chi)=\lim_{\tau\to\pi/2}(\varphi,f(\x)u(\tau,0)\psi)
 =\lim_{\tau\to\pi/2}(u(0,\tau)f(\x)^*\varphi,\psi)=(\w_+f(\x)^*\varphi,\psi)\,,
\end{equation}
so $\chi=f(\x)\w_+^*\psi$ and the property \eqref{smoothfw} follows.
\end{proof}
It should be clear from the proof of the above theorem that the only
potential obstacle to the asymptotic completeness is the behavior of
$\pav\cdot\ev(\tau,\x)$ in the neighborhood of $\x^2=1$. If the norm
$\|\pav\cdot\ev(\tau,.)\|_{K,\infty}$ may be replaced by
$\|\pav\cdot\ev(\tau,.)\|_\infty$, then in the proof of convergence of
\eqref{ufu} the function $f$ may be replaced by $1$, and one obtains the
following.
\begin{thm}\label{ascompl}
If the assumptions of Theorem \ref{wave} hold with the condition
\eqref{wave2} replaced by
\begin{equation}\label{wave4}
 \int_{-\pi/2}^{\pi/2}\big(\tfrac{\pi}{2}-|\tau|\big)\|\pav\cdot\ev(\tau,.)\|_\infty\,d\tau<\infty\,,
\end{equation}
then the wave operators $\w_\pm$ are unitary.
\end{thm}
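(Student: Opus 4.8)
The plan is to repeat the argument that established the existence of $\w_+$ in the proof of Theorem \ref{wave}, but with the smooth cutoff $f(\x)$ in the identity \eqref{ufu} replaced by the constant function $f\equiv1$; the strengthened hypothesis \eqref{wave4} is exactly what legitimizes this replacement. With $f\equiv1$ the identity \eqref{ufu} reads, for $\psi\in\Dc(h_0)$,
\[
 u_0(0,\tau)u(\tau,0)\psi=\psi-i\int_0^\tau u_0(0,\si)[h(\si)-h_0]u(\si,0)\psi\,d\si\,.
\]
First I would show that the integrand is norm-integrable over $[0,\pi/2)$, so that the left-hand side has a strong limit as $\tau\to\pi/2$. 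Since $u_0(0,\tau)=\exp[i\tau h_0]$ converges strongly to the unitary $u_0(0,\pi/2)$ (see \eqref{u0h}), it then follows that $u(\tau,0)\psi$ converges, and, as $\Dc(h_0)$ is dense, the strong limit $W_+\df\slim_{\tau\to\pi/2}u(\tau,0)$ exists.

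For the integrability I would reuse the bound on $\|[fh(\si)-h_0f]u(\si,0)\psi\|$ already displayed in the proof of Theorem \ref{wave}, now with $\pav f=0$ and $\Delta f=0$, and with the unitarity of $u(\si,0)$ together with $\piv(\si)u(\si,0)=u(\si,0)\piti(\si)$ used to pass from $\piv$ to $\piti$. This gives
\begin{multline*}
 \|[h(\si)-h_0]u(\si,0)\psi\|\leq\|\av(\si,.)\|_\infty\|\piti(\si)\psi\|\\
 +\tfrac12\big(\|\pav\cdot\av(\si,.)\|_\infty+\|\av(\si,.)\|_\infty^2\big)\|\psi\|\,.
\end{multline*}
The terms $\|\av(\si,.)\|_\infty\|\piti(\si)\psi\|$ and $\|\av(\si,.)\|_\infty^2$ are integrable over $[0,\pi/2)$ by the estimates already obtained for Theorems \ref{aspos} and \ref{wave} (via \eqref{pibound}, \eqref{figae} and \eqref{wave1}). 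The only genuinely new term is $\|\pav\cdot\av(\si,.)\|_\infty$; here I would take the divergence of the gauge relation \eqref{figae} to get $\pav\cdot\av(\tau)=\pav\cdot\av(0)-\int_0^\tau\pav\cdot\ev(\rho)\,d\rho$ and then integrate by parts exactly as in \eqref{parts}, obtaining
\begin{multline*}
 \int_0^{\pi/2}\|\pav\cdot\av(\tau,.)\|_\infty\,d\tau\leq\tfrac{\pi}{2}\|\pav\cdot\av(0,.)\|_\infty\\
 +\int_0^{\pi/2}\big(\tfrac{\pi}{2}-\tau\big)\|\pav\cdot\ev(\tau,.)\|_\infty\,d\tau\,,
\end{multline*}
which is finite precisely by \eqref{wave4}. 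This single estimate, where the global norm $\|\pav\cdot\ev\|_\infty$ replaces the localized $\|\pav\cdot\ev\|_{K,\infty}$ of \eqref{wave2}, is the crux of the whole statement.

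It remains to identify $W_+$ and conclude. Since $u(0,\tau)=u(\tau,0)^*$, for all $\varphi,\psi$ one has $(\varphi,W_+\psi)=\lim_{\tau\to\pi/2}(u(0,\tau)\varphi,\psi)=(\w_+\varphi,\psi)$, whence $W_+=\w_+^*$. Being a strong limit of unitaries, $W_+$ is isometric, so $\w_+\w_+^*=W_+^*W_+=\mathbb{1}$, i.e.\ $E_+=\mathbb{1}$; combined with the isometry relation $\w_+^*\w_+=\mathbb{1}$ of Theorem \ref{wave} this shows that $\w_+$ is unitary. The case of $\w_-$ is identical with the limit $\tau\to-\pi/2$. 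I expect no obstacle beyond the new integrability estimate above: once \eqref{wave4} is available, the smooth cutoff $f$ is no longer needed to tame $\pav\cdot\av$ near $\x^2=1$, and the entire Theorem \ref{wave} argument carries over with $f\equiv1$.
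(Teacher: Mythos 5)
Your proposal is correct and follows essentially the same route as the paper: the text preceding Theorem \ref{ascompl} proves it precisely by setting $f\equiv1$ in \eqref{ufu}, using \eqref{wave4} to make $\|\pav\cdot\av(\tau,.)\|_\infty$ integrable, and concluding that $\slim_{\tau\to\pm\pi/2}u(\tau,0)=\w_\pm^*$ exists, which forces $E_\pm=\mathbb{1}$. Your write-up merely makes explicit the estimates and the identification $W_+=\w_+^*$ that the paper leaves to the reader.
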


\section{Back to spacetime and conclusions}\label{scatspti}

Let us remind the reader, that the class of scattering electromagnetic fields
$F(x)$ was announced in the abstract, and then discussed in Appendix
\ref{aselm}. The bounds they satisfy were adapted to the field $\ev$ in
Section \ref{fieldsgauges}, where we also anticipated that the resulting
estimates satisfy the assumptions of Theorems~\ref{aspos} and \ref{wave}. We
now formulate the results of these theorems in natural spacetime terms, with
the use of relation \eqref{Uu} providing the link between the oscillator
picture evolution operator $u(\tau,\si)$ and the Schr\"odinger operator
$U(t,s)$.

The asymptotic variables $\xti_\pm$ provided by Theorem \ref{aspos} may be
obtained by
\begin{equation}\label{xS}
 \exp{\big[i\qv\cdot \xti_\pm]}
 =\slim_{t\to\pm\infty}U(0,t)\exp{\Big[i\qv\cdot\frac{\x}{\tm}\Big]}\,U(t,0)\,,
\end{equation}
where we used relation \eqref{NFN}. Now they have a natural interpretation of
asym\-ptotic velocity operators. The wave operators $\w_\pm$ of
Theorem~\ref{wave} are now
\begin{equation}\label{waveS}
  \w_\pm=\slim_{t\to\pm\infty}U(0,t)N(t)\,,
\end{equation}
and their conjugates are given by
\begin{equation}\label{waveSconj}
  \w^*_\pm=\slim_{t\to\pm\infty}N(t)^*U(t,0)E_\pm\,.
\end{equation}

It is visible that with our choice of gauge the definition of wave operators
in the present formalism does not need further corrections (e.g.\ of the
Dollard type) to compensate the long-range character of the interaction. The
joint spectrum of $\xti_+$ (and similarly $\xti_-$) covers the whole space
$\mR^3$ and outside $\xti_+^2=1$ ($\xti_-^2=1$) is absolutely continuous.
Whether this continuity extends to $\mR^3$, in which case $E_\pm=\mathbb{1}$
and the asymptotic completeness holds, is an open problem. The difficulty
lies in the null-asymptotic behavior of electromagnetic potentials (and
fields), which is slower than assumed in usual analyses of time-decaying
potentials. The problem may be due to the usual inconsistency in systems of
the considered type: the Lorentz symmetry of the electromagnetic field vs
Galilean symmetry of the~Schr\"odinger equation.

The existence of the wave operators in the given form, despite the problems
with asymptotic completeness, is a further argument for our main point:
appropriate choice of gauge eliminates at least some of the infrared
problems. The gauge condition found suitable in the present context is
characterized by Eq.\ \eqref{xA}. This property closely parallels the gauge
condition obtained in the analysis of the asymptotic behavior of the Dirac
field inside the lightcone \cite{her95}. In our opinion this may have
interesting implications for quantum electrodynamics as well, although in
that context the non-locality of the gauge will be a problem (cf.\
\eqref{figae}).

\section{Acknowledgements}

I am grateful to Jan Derezi\'nski for important literature hints and to
Pawe{\l} Duch for an interesting discussion.

\section*{Appendix}

\appendix

\section{Domain relations}\label{inequalities}

\begin{lem}
Let $\av,\,\pav\av \in L^\infty(\mR^3)$, so that the operator $\piv^2+\x^2$
is essentially self-adjoint on $C^\infty_0(\mR^3)$. Then the following
inequalities are satisfied
\begin{gather}
 \|\piv^2\varphi\|^2+\|\x^2\varphi\|^2\leq\|(\piv^2+\x^2)\varphi\|^2+6\|\varphi\|^2\,,\label{pix}\\[1ex]
 \|(\x\cdot\piv+\piv\cdot\x)\varphi\|^2
 \leq2(\|\piv^2\varphi\|^2+\|\x^2\varphi\|^2)+3\|\varphi\|^2\,.\label{xppx}
\end{gather}
Therefore,
\begin{equation}
 \Dc(\piv^2)\cap\Dc(\x^2)=\Dc(\piv^2+\x^2)\subseteq \Dc(\x\cdot\piv+\piv\cdot\x)\,.
\end{equation}
\end{lem}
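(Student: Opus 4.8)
The plan is to prove both inequalities first on the core $C^\infty_0(\mR^3)$, where all the operators $\piv^2$, $\x^2$ and $\x\cdot\piv+\piv\cdot\x$ act unambiguously and are symmetric, and only afterwards pass to the domains using the assumed essential self-adjointness of $\piv^2+\x^2$. The key point I would isolate at the outset is that the only commutator that actually enters is the canonical $[\pi_j,x_k]=-i\delta_{jk}$, which does not feel $\av$; the field-curvature terms $[\pi_j,\pi_k]$ occur only paired with the symmetric coefficient $x_jx_k$ and therefore cancel. Consequently the estimates carry exactly the universal constants of the free ($\av=0$) oscillator. The computational heart is the single operator identity, obtained by writing $\pi_j\x^2\pi_j$ in both orders and adding (using $[\pi_j,\x^2]=-2ix_j$ and $[\pi_j,x_j]=-i$), then summing over $j$:
\[
 \piv^2\x^2+\x^2\piv^2=2\sum_j\pi_j\x^2\pi_j-6\,.
\]

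For \eqref{pix} I would expand $\|(\piv^2+\x^2)\varphi\|^2=\|\piv^2\varphi\|^2+\|\x^2\varphi\|^2+2\Rp(\piv^2\varphi,\x^2\varphi)$, so that the claim reduces to $2\Rp(\piv^2\varphi,\x^2\varphi)\geq-6\|\varphi\|^2$. Since $\piv^2$ and $\x^2$ are symmetric on the core, $2\Rp(\piv^2\varphi,\x^2\varphi)=(\varphi,[\piv^2\x^2+\x^2\piv^2]\varphi)$, which by the identity above equals $2\sum_j\||\x|\pi_j\varphi\|^2-6\|\varphi\|^2\geq-6\|\varphi\|^2$. This gives \eqref{pix} at once.

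For \eqref{xppx} I would reduce the left-hand side to the same positive form in two short steps. First, writing $D=\x\cdot\piv+\piv\cdot\x=2\x\cdot\piv-3i$ and using that $D$ is symmetric (so $(\varphi,D\varphi)$ is real), one gets $\|D\varphi\|^2=4\|\x\cdot\piv\varphi\|^2-9\|\varphi\|^2$. Second, the pointwise Cauchy--Schwarz bound $\big|\sum_j x_j\,\pi_j\varphi\big|\leq|\x|\big(\sum_j|\pi_j\varphi|^2\big)^{1/2}$, squared and integrated, yields $\|\x\cdot\piv\varphi\|^2\leq\sum_j\||\x|\pi_j\varphi\|^2$. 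But by the identity the right-hand side is $\Rp(\piv^2\varphi,\x^2\varphi)+3\|\varphi\|^2\leq\|\piv^2\varphi\|\,\|\x^2\varphi\|+3\|\varphi\|^2\leq\tfrac12(\|\piv^2\varphi\|^2+\|\x^2\varphi\|^2)+3\|\varphi\|^2$. Combining the three estimates reproduces exactly \eqref{xppx}.

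Finally, for the domain relations I would argue as usual. The elementary bound $\|(\piv^2+\x^2)\varphi\|\leq\|\piv^2\varphi\|+\|\x^2\varphi\|$ together with \eqref{pix} shows that, on $C^\infty_0(\mR^3)$, the graph norm of $\piv^2+\x^2$ is equivalent to $(\|\piv^2\cdot\|^2+\|\x^2\cdot\|^2+\|\cdot\|^2)^{1/2}$; approximating $\varphi\in\Dc(\piv^2+\x^2)$ by core vectors and using closedness of $\piv^2$, $\x^2$ gives $\Dc(\piv^2+\x^2)\subseteq\Dc(\piv^2)\cap\Dc(\x^2)$, while the reverse inclusion follows because $\piv^2+\x^2$ on the intersection is a symmetric extension of the (by hypothesis) essentially self-adjoint restriction to $C^\infty_0$, hence contained in its closure. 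Likewise \eqref{xppx} makes $\x\cdot\piv+\piv\cdot\x$ relatively bounded with respect to $\piv^2+\x^2$ on the core, so $\Dc(\piv^2+\x^2)\subseteq\Dc(\x\cdot\piv+\piv\cdot\x)$ (its closure). The only step that genuinely needs care, and which I would state explicitly, is the cancellation of the $[\pi_j,\pi_k]$ terms noted in the first paragraph; once that is verified, everything reduces to routine commutator manipulations and Cauchy--Schwarz.
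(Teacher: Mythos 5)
Your proof is correct and follows essentially the same route as the paper's: your central identity $\piv^2\x^2+\x^2\piv^2=2\sum_j\pi_j\x^2\pi_j-6$ is just the operator form of the paper's pointwise identity $\ov{\varphi}\,\piv^2\varphi+\varphi\,\ov{\piv^2\varphi}=-\Delta|\varphi|^2+2|\piv\varphi|^2$ multiplied by $\x^2$ and integrated by parts, both yielding $\Rp(\piv^2\varphi,\x^2\varphi)=\||\x|\piv\varphi\|^2-3\|\varphi\|^2$ with the same constants. The remaining steps --- the splitting $\x\cdot\piv=\tfrac12(\x\cdot\piv+\piv\cdot\x)+\tfrac32 i$, the pointwise Cauchy--Schwarz bound $|\x\cdot\piv\varphi|^2\leq\x^2|\piv\varphi|^2$, and the graph-norm approximation from the core $C^\infty_0(\mR^3)$ using closedness --- coincide with the paper's argument.
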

\begin{proof}
For $\varphi\in C^\infty_0(\mR^3)$ the following identity is easily obtained:
\begin{equation}\label{pidel}
 \ov{\varphi}\,\piv^2\varphi+\varphi\,\ov{\piv^2\varphi}=-\Delta|\varphi|^2+2|\piv\varphi|^2\,.
\end{equation}
Multiplying this by $\x^2$ and integrating one obtains (integrate by parts
the first term on the rhs)
\[
 \Rp(\piv^2\varphi,\x^2\varphi)=-3\|\varphi\|^2+\||\x|\piv\varphi\|^2\geq -3\|\varphi\|^2\,,
\]
so \eqref{pix} for $\varphi\in C^\infty_0(\mR^3)$ easily follows. For
\eqref{xppx} note that $|\x\cdot\piv\varphi|^2\leq\x^2|\piv\varphi|^2$, so
using again \eqref{pidel} we have
\begin{multline}
 \|\x\cdot\piv\varphi\|^2\leq\tfrac{1}{2}\int\x^2\Delta|\varphi|^2+\Rp(\piv^2\varphi,\x^2\varphi)\\
 \leq3\|\varphi\|^2+\|\piv^2\varphi\|\|\x^2\varphi\|
 \leq \tfrac{1}{2}(\|\piv^2\varphi\|^2+\|\x^2\varphi\|^2)+3\|\varphi\|^2\,.
\end{multline}
Moreover, splitting
$\x\cdot\piv=\tfrac{1}{2}(\x\cdot\piv+\piv\cdot\x)+\tfrac{3}{2}i$ one has
\[
 \|\x\cdot\piv\varphi\|^2=\tfrac{1}{4}\|(\x\cdot\piv+\piv\cdot\x)\varphi\|^2
 +\tfrac{9}{4}\|\varphi\|^2\,.
\]
Substituting this on the lhs of the last inequality one obtains \eqref{xppx}
for\linebreak \mbox{$\varphi\in C^\infty_0(\mR^3)$}.

If $\varphi\in\Dc(\piv^2+\x^2)$, then it is approximated in the graph norm of
this operator by a sequence $\varphi_n\in C^\infty_0(\mR^3)$. The
inequalities \eqref{pix} and \eqref{xppx} show that then this is also a
Cauchy sequence in the graph norms of $\piv^2$, $\x^2$ and
$\x\cdot\piv+\piv\cdot\x$. These operators are closed, so $\varphi$ is in
their domains, and relations \eqref{pix} and \eqref{xppx} are satisfied on
$\Dc(\piv^2+\x^2)$. The inclusion
$\Dc(\piv^2)\cap\Dc(\x^2)\subseteq\Dc(\piv^2+\x^2)$ is obvious.
\end{proof}

\section{Asymptotic behavior of electromagnetic fields}\label{aselm}

We discuss here the decay properties of electromagnetic fields in the
Minkowski spacetime language. For more extensive discussion and explanation
of details we refer the reader to \cite{her17}.

Fields $\Ev$ and $\Bv$ are the $3$-vector parts of the Minkowski tensor
$F_{ab}(x)$ given by
\begin{equation}
 E^i(t,\x)=F^{i0}(x)\,,\qquad B^i(t,\x)=-\sum_{jk}\vep^{ijk}F^{jk}(x)\,;
\end{equation}
here and in the rest of these remarks: $x=(t,\x)$; $a$, $b$ are Minkowski
indices and $i$, $j$, $k$ are $3$-space indices. Also, we note that the field
$\Cv$ defined by \eqref{C} is the $3$-vector space part of the field
\begin{equation}\label{CF}
 C^a(x)=F^a{}_b(x)x^b\,.
\end{equation}

Electromagnetic fields present in scattering situations of the field theory
have the form
\begin{equation}
 F=F^\ret+F^\inc=F^\adv+F^\out\,,
\end{equation}
 where $F^\ret/F^\adv$ is the
retarded/advanced field of an electromagnetic current:
\begin{equation}
 F^{\mathrm{ret}/\mathrm{adv}}_{ab}(x)
 =4\pi\int D_{\mathrm{ret}/\mathrm{adv}}(x-y)[\p_aJ_b(y)-\p_bJ_a(y)]dy
\end{equation}
 and
$F^\inc/F^\out$ is a free incoming/outgoing field. In remote past and future
the current $J$ is assumed to tend to asymptotic currents, homogeneous of
degree~$-3$, with support inside the lightcone. Similarly, the free fields
are produced as radiation fields of currents with the asymptotic behavior of
the same type. We discuss the asymptotic behavior of fields in the half-space
$t\geq0$ with the use of representation $F=F^\adv+F^\out$; the case $t\leq0$
with $F=F^\ret+F^\inc$ is analogous.

For $t\geq0$ and $t+|\x|$ tending to infinity, the dominant contribution to
the advanced field comes from the asymptotic outgoing current. Therefore, in
this region this field is homogeneous of degree $-2$, hence the field
$\Cv^\adv$ is homogeneous of degree $-1$, and the field $\pav\cdot\Cv^\adv$
is homogeneous of degree $-2$. Thus for the advanced part the bounds
\eqref{EC} and  \eqref{divC} are satisfied. In fact, the $-2$ homogeneity of
$\pav\cdot\Cv^\adv$ implies that for the corresponding part $\ev^\adv$ of
$\ev$ there is $|\pav\cdot\ev^\adv(\tau,\x)|\leq\con/\cos\tau$, which is a
bound of the type~\eqref{eboundstr} sufficient for the validity of Thm.\
\ref{ascompl}.\footnote{The above reasoning is sufficient as it stands if the
asymptotic currents do not have oscillating contributions. However, a closer
analysis of the cases such as the classical Dirac field current, which does
have oscillating asymptotic terms, confirms the above bounds also in this
case. We do not enter here into more detailed discussion of this point and
refer the reader to \cite{her95} and \cite{her17} for a more explicit
discussion of the setting.}

For the free outgoing field and its Lorenz gauge potential we use the
integral representations valid for solutions of the wave equation (see e.g.\
\cite{her17}):
\begin{gather}\label{Fout}
 F^\out_{ab}(x)=-\frac{1}{2\pi}\int\Big[l_a\ddot{V}_b(x\cdot l,l)
 -l_b\ddot{V}_a(x\cdot l,l)\Big]\,d^2l\,,\\
 A^\out_a(x)=-\frac{1}{2\pi}\int\dot{V}_a(x\cdot l,l)\,d^2l\,.\label{Aout}
\end{gather}
Here $l$ represents a vector on the future lightcone and $d^2l$ is the
Lorentz-invariant measure on the set of future null directions, applicable to
functions of $l$ homogeneous of degree $-2$. The function $V$ describes the
future null asymptote of the potential and the field:
\begin{equation}
 \lim_{R\to\infty}RA^\out(x+Rl)=V(x\cdot l,l)\,,\quad
 \lim_{R\to\infty}RF^\out(x+Rl)=2l\wedge \dot{V}(x\cdot l,l)\,,
\end{equation}
and $\dot{V}(s,l)=\p V(s,l)/\p s$. Moreover, $V(s,l)$ is homogeneous of
degree $-1$, $l\cdot V(s,l)=0$, $V(s,l)\to0$ for $s\to\infty$, and
$\dot{V}(s,l)$ (together with its low orders derivatives in the cone
variable) is bounded by $\con(1+|s|)^{-1-\vep}$ for some $\vep>0$. Using this
bound it is easy to show that
\begin{multline}
 |A^\out(x)|\leq \con\int\frac{d\W(\lv)}{(1+|x^0-\x\cdot\lv|)^{1+\vep}}\\
 \leq \frac{\con}{1+|x^0|+|\x|}\Big[\frac{\theta(x^2)}{(1+|x^0|-|\x|)^\vep}
 +\theta(-x^2)\Big]\,,
\end{multline}
where $\lv$ represents unit $3$-vectors and $d\W(\lv)=\sin\vartheta
d\vartheta d\varphi$ is the solid angle measure, with $\vartheta$ measured
with respect to $\x$. In the region $x^2\geq0$ the field $F^\out$ is
estimated similarly as above (with the bound of $\ddot{V}(s,l)$ by
\mbox{$\con(1+|s|)^{-2-\vep}$} following from earlier assumptions), but in
the region $x^2<0$ one first has to integrate in \eqref{Fout} once by parts
with respect to $\vartheta$ angle variable. In this way one finds
\begin{equation}
 |F^\out(x)|\leq\frac{\con}{1+|x^0|+|\x|}\Big[\frac{1}{(1+||x^0|-|\x||)^{1+\vep}}
 +\frac{\theta(-x^2)}{1+|x^0|+|\x|}\Big]\,.
\end{equation}
The estimate of $\Ev^\out$ in \eqref{EC} is obviously satisfied. We turn to
the field $\Cv^\out$. The relation \eqref{CF} shows that this is again a
solution of the wave equation with the integral representation
\begin{equation}
 C^\out_a(x)=-\frac{1}{2\pi}\int \dot{W}_a(x\cdot l,l)\,d^2l\,.
\end{equation}
Denote $L_{ab}=l_a\p/\p l^b-l_b\p/\p l^a$--the intrinsic differential
operator on the lightcone. With the use of identity $\int L_{ab}F(l)d^2l=0$
valid for each $C^1$-function~$F$, homogeneous of degree $-2$, one shows that
\begin{equation}
 W^a(s,l)=V^a(s,l)-s\dot{V}^a(s,l)-L^{ab}V_b(s,l)\,.
\end{equation}
As \mbox{$l\cdot W(s,l)=0$}, one has $\p\cdot C(x)=0$, so
\begin{equation}
 \pav\cdot\Cv^\out(x)=\frac{1}{2\pi}\int l_0\ddot{W}^0(x\cdot l,l)\,d^2l\,.
\end{equation}
Now, it is easy to see that the function $\dot{W}(s,l)$ has the same decay
properties in $|s|$ as function $\dot{V}(s,l)$, so $\Cv^\out(x)$ and
$\pav\cdot\Cv^\out(x)$ are similarly estimated as $A^\out(x)$ and
$F^\out(x)$, respectively. This is sufficient to satisfy the bounds
\eqref{EC} and \eqref{divC}.

\section{Differentiation of operators $\tilde{k}$}\label{difk}

For a symmetric operator $\tilde{k}$ defined by \eqref{tik} suppose that
$\Dc(h_0)$ is contained in the domain of $\dot{k}$ and of the weak commutator
$[k(\tau),h(\tau)]_\mathrm{w}$ defined for $\varphi,\,\psi\in\Dc(h_0)$ by:
\begin{equation}
 (\varphi,[k(\tau),h(\tau)]_\mathrm{w}\psi)
 =(k(\tau)\varphi,h(\tau)\psi)-(h(\tau)\varphi,k(\tau)\psi)\,.
 \end{equation}
In this case we have
 $\p_\tau(\varphi,\tilde{k}\psi)=(\varphi,\big(\tilde{\dot{k}}-i\wt{[k,h]_\mathrm{w}}\big)\psi)$,
and upon integrating we can omit the product with $\varphi$ to obtain
\begin{equation}
 [\tilde{k}(\tau)-k(0)]\psi=\int_0^\tau\big(\tilde{\dot{k}}-i\wt{[k,h]_\mathrm{w}}\big)(\rho)\psi\,d\rho\,,
\end{equation}
from which the differential form follows
\begin{equation}
 \dot{\tilde{k}}\psi=\big(\tilde{\dot{k}}-i\wt{[k,h]_\mathrm{w}}\big)\psi\,.
\end{equation}
For our basic observables it is easy to show, that the weak commutators are
equal to those calculated naively, so in particular
\begin{equation}\label{xh}
 [\x,h]_\mathrm{w}=i\piv\,.
\end{equation}
Also, rather obviously, $[h,h]_\mathrm{w}=0$, so
$\dot{\tilde{h}}=\tilde{\dot{h}}$.

\frenchspacing

\end{document}